\def\UrlSpecials{\do\~{\kern -.15em\lower .7ex\hbox{~}\kern .04em}} \catcode`~=13 
\newcommand{\calA}{\mathcal{A}}
\newcommand{\calE}{\mathcal{E}}
\newcommand{\calI}{\mathcal{I}}
\newcommand{\calQ}{\mathcal{Q}}
\newcommand{\calT}{\mathcal{T}}
\newcommand{\bb}{\mathbf{b}}
\newcommand{\bE}{\mathbf{E}}
\newcommand{\bX}{\mathbf{X}}
\newcommand{\by}{\mathbf{y}}
\newcommand{\bY}{\mathbf{Y}}
\newcommand{\rme}{\mathrm{e}}
\newcommand{\bbE}{\mathbb{E}}
\newcommand{\bbP}{\mathbb{P}}
\newcommand{\bbR}{\mathbb{R}}
\DeclareMathAlphabet{\mathbsf}{OT1}{cmss}{bx}{n}
\DeclareMathAlphabet{\mathssf}{OT1}{cmss}{m}{sl}
\newcommand{\rvP}{\mathsf{P}}
\DeclareSymbolFont{bsfletters}{OT1}{cmss}{bx}{n}  
\DeclareSymbolFont{ssfletters}{OT1}{cmss}{m}{n}
\DeclareMathSymbol{\bsfGamma}{0}{bsfletters}{'000}
\DeclareMathSymbol{\ssfGamma}{0}{ssfletters}{'000}
\DeclareMathSymbol{\bsfDelta}{0}{bsfletters}{'001}
\DeclareMathSymbol{\ssfDelta}{0}{ssfletters}{'001}
\DeclareMathSymbol{\bsfTheta}{0}{bsfletters}{'002}
\DeclareMathSymbol{\ssfTheta}{0}{ssfletters}{'002}
\DeclareMathSymbol{\bsfLambda}{0}{bsfletters}{'003}
\DeclareMathSymbol{\ssfLambda}{0}{ssfletters}{'003}
\DeclareMathSymbol{\bsfXi}{0}{bsfletters}{'004}
\DeclareMathSymbol{\ssfXi}{0}{ssfletters}{'004}
\DeclareMathSymbol{\bsfPi}{0}{bsfletters}{'005}
\DeclareMathSymbol{\ssfPi}{0}{ssfletters}{'005}
\DeclareMathSymbol{\bsfSigma}{0}{bsfletters}{'006}
\DeclareMathSymbol{\ssfSigma}{0}{ssfletters}{'006}
\DeclareMathSymbol{\bsfUpsilon}{0}{bsfletters}{'007}
\DeclareMathSymbol{\ssfUpsilon}{0}{ssfletters}{'007}
\DeclareMathSymbol{\bsfPhi}{0}{bsfletters}{'010}
\DeclareMathSymbol{\ssfPhi}{0}{ssfletters}{'010}
\DeclareMathSymbol{\bsfPsi}{0}{bsfletters}{'011}
\DeclareMathSymbol{\ssfPsi}{0}{ssfletters}{'011}
\DeclareMathSymbol{\bsfOmega}{0}{bsfletters}{'012}
\DeclareMathSymbol{\ssfOmega}{0}{ssfletters}{'012}
\newcommand{\tilN}{\tilde{N}}
\newcommand{\hatS}{\hat{S}}
\newcommand{\hatY}{\hat{Y}}
\newcommand{\bark}{\bar{k}}
\DeclareMathOperator{\var}{\mathsf{Var}}
\newtheorem{theorem}{Theorem} 
\newtheorem{lemma}{Lemma}
\newtheorem{proposition}{Proposition}
\newtheorem{corollary}{Corollary}
\newcommand{\qednew}{\nobreak \ifvmode \relax \else
      \ifdim\lastskip<1.5em \hskip-\lastskip
      \hskip1.5em plus0em minus0.5em \fi \nobreak
      \vrule height0.75em width0.5em depth0.25em\fi}
\begin{document}
\title{On the All-Or-Nothing Behavior \\ of Bernoulli Group Testing} 

\author{Lan V.\ Truong, Matthew Aldridge, and Jonathan Scarlett
\thanks{L.~V.~Truong is with the Department of Engineering, University of Cambridge.  M.~Aldridge is with the School of Mathematics, University of Leeds.  J.~Scarlett is with the  Department of Computer Science, Department of Mathematics, and Institute of Data Science, National University of Singapore (NUS).  e-mail: \url{lt407@cam.ac.uk}; \url{m.aldridge@leeds.ac.uk}; \url{scarlett@comp.nus.edu.sg}

This work was supported by the Singapore National Research Foundation (NRF) under grant number R-252-000-A74-281.

Copyright \copyright~2017 IEEE. Personal use of this material is permitted.  However, permission to use this material for any other purposes must be obtained from the IEEE by sending a request to pubs-permissions@ieee.org
}
}

\maketitle
\begin{abstract} 
    In this paper, we study the problem of non-adaptive group testing, in which one seeks to identify which items are defective given a set of suitably-designed tests whose outcomes indicate whether or not at least one defective item was included in the test.  The most widespread recovery criterion seeks to exactly recover the entire defective set, and relaxed criteria such as approximate recovery and list decoding have also been considered.  In this paper, we study the fundamental limits of group testing under the significantly relaxed {\em weak recovery} criterion, which only seeks to identify a small fraction (e.g., $0.01$) of the defective items.  Given the near-optimality of i.i.d.~Bernoulli testing for exact recovery in sufficiently sparse scaling regimes, it is natural to ask whether this design additionally succeeds with much fewer tests under weak recovery.  Our main negative result shows that this is not the case, and in fact, under i.i.d.~Bernoulli random testing in the sufficiently sparse regime, an {\em all-or-nothing} phenomenon occurs:  When the number of tests is slightly below a threshold, weak recovery is impossible, whereas when the number of tests is slightly above the same threshold, high-probability exact recovery is possible.  In establishing this result, we additionally prove similar negative results under Bernoulli designs for the weak detection problem (distinguishing between the group testing model vs.~completely random outcomes) and the problem of identifying a single item that is definitely defective.  On the positive side, we show that all three relaxed recovery criteria can be attained using considerably fewer tests under suitably-chosen non-Bernoulli designs.  Thus, our results collectively indicate that when too few tests are available, naively applying i.i.d.~Bernoulli testing can lead to catastrophic failure, whereas ``cutting one's losses'' and adopting a more carefully-chosen design can still succeed in attaining these less stringent criteria.

%
\end{abstract}   
\begin{IEEEkeywords}
    Group testing, hypothesis testing, approximate recovery, phase transitions.
\end{IEEEkeywords}

\section{Introduction} \label{eq:intro}

The group testing problem has recently regained significant attention following new applications and connections with compressive sensing; see \cite{Ald19} for a recent survey.  Briefly, the idea of group testing is to identify a small subset of defective items within a larger subset of items, based on a number of tests whose binary outcomes indicate whether or not at least one defective item was included in the test.

The standard recovery goal in group testing is to exactly identify the entire defective set.  In combinatorial group testing \cite{Kau64,Du93} a single test design is required to succeed for all defective sets up to a certain size, whereas in probabilistic group testing \cite{Mal78,Ald19} only high-probability recovery is required with respect to a random defective set (and/or a random test design).  Various relaxed recovery criteria have also appeared, including list decoding recovery \cite{Deb05,Dya83,Ras90,Che09,Ind10,Ngo11,Sca17} and approximate recovery criteria that allow a small number of false positives and/or false negatives in the reconstruction \cite{Lee15a,Sca17,Sca17,Sca15b}.

In this paper, focusing on probabilistic group testing, our goal is to better understand the fundamental limits of what can be achieved in the group testing problem under {\em significantly weaker} recovery criteria.  In particular, instead of asking when it is possible to recover most of the defectives, we seek to understand when it is possible just to {\em recover a small fraction}.  For general non-adaptive designs, we show that this goal can be obtained with much fewer tests, and identify an exact threshold on the number required.  On the other hand, for the widely-adopted i.i.d.\footnote{We write i.i.d.~as a shorthand for {\em independent and identically distributed}.}~Bernoulli test matrix design, we identify scenarios under which an {\em all-or-nothing} phenomenon occurs:  When the number of tests is slightly above a certain threshold, high-probability exact recovery is possible, whereas slightly below the same threshold, essentially nothing can be learned from the tests.  Thus, while Bernoulli designs can be near-optimal under standard recovery criteria, they are also prone to complete failure when there are too few tests, and accordingly can be highly suboptimal under relaxed recovery criteria.  Along the way, we additionally provide analogous results for the problems of weak detection and identifying a definite defective, which are formally defined in Section \ref{sec:overview}.


\subsection{Problem Setup} \label{sec:setup}

We consider a population of $p$ items indexed as $\{1,\dotsc,p\}$, and we let $k$ denote the number of defective items.  The set of defective items is denoted by $S$, and is assumed to be uniform over the ${p \choose k}$ possibilities.

A group testing procedure performs a sequence of $n$ tests, with $X^{(i)} \in \{0,1\}^p$ indicating which item is in the $i$-th test.  The resulting outcomes are $Y^{(i)} = \bigvee_{j \in S} X_j^{(i)}$ for $i=1,\dotsc,n$, where $n$ is the number of tests. That is, the test outcome is $1$ if there is any defective item in the test, and $0$ otherwise.  The tests can be represented by a matrix $\bX \in \{0,1\}^{n \times p}$, whose $i$-th row is $X^{(i)}$.  Similarly, the outcomes can be represented by a vector $\bY \in \{0,1\}^n$, whose $i$-th entry is $Y^{(i)}$.

In general, group testing procedures may be {\em adaptive} (i.e., $X^{(i)}$ may be chosen as a function of the previous outcomes) or {\em non-adaptive} (i.e., all $X^{(i)}$ must be selected prior to observing any outcomes).  We focus on the non-adaptive setting, which is often preferable in practice due to permitting highly parallelized tests.  
In particular, except where stated otherwise, we consider the widely-adopted {\em (i.i.d.) Bernoulli random test design} \cite[Sec.~2.1]{Ald19}, in which every item is independently placed in each test with probability $\frac{\nu}{k}$ for some $\nu > 0$, and we choose $\nu$ to be such that 
\begin{equation}
    \bigg(1 - \frac{\nu}{k}\bigg)^k = \frac{1}{2}. \label{eq:choice_nu}
\end{equation}
This choice ensures that the probability of a positive test is exactly $\frac{1}{2}$, which maximizes the entropy of each test outcome.  More importantly, this choice of $\nu$ leads to a provably optimal number of tests in broad scaling regimes, as we survey in Section \ref{sec:related}.  A simple asymptotic analysis gives $\nu = (\log 2)(1+o(1))$ as $k \to \infty$, which behaves similarly to the choice $\nu = \log 2$, but the exact choice described by \eqref{eq:choice_nu} will be more convenient to work with.  

\subsection{Related Work} \label{sec:related}

There have recently been numerous developments on theory and algorithms for probabilistic group testing \cite{Cha11,Che11,Mal13,Ald14a,Sca15b,Sca17b,Coj19,Coj19a} (see \cite{Ald19} for a survey); here we focus only on those most relevant to the present paper.

The most relevant works to us are those attaining upper and/or lower bounds on the number of tests of the form $\big( k \log_2 \frac{p}{k} \big) (1+o(1))$.  The most straightforward way that this quantity arises is that with $p \choose k$ possible defective sets and $2^n$ possible sequences of outcomes, we require $n \ge \log_2 {p \choose k}$ for each defective set to produce different outcomes.  In the sublinear regime $k = o(p)$, this simplifies to $n \ge \big( k \log_2 \frac{p}{k} \big) (1+o(1))$.  Building on this intuition, Fano's inequality was used in \cite{Mal78,Cha14} to show that $n \ge (1-\delta) \log_2 {p \choose k}$ is required to attain an error probability of at most $\delta$, and a refined bound $n \ge \log_2\big( {p \choose k} (1-\delta) \big)$ was established in \cite{Bal13}.

More recently, various results showed that $\big( k \log_2 \frac{p}{k} \big) (1+o(1))$ tests are {\em sufficient} for certain recovery guarantees under broad scaling regimes on $k$ as a function of $p$.  In \cite{Sca15b}, high-probability exact recovery was shown to be possible under Bernoulli random testing when $k = O(p^{1/3})$ and $n = \big( k \log_2 \frac{p}{k} \big) (1+o(1))$, and in addition, this result was extended to all $k = o(p)$ when the exact recovery criterion is replaced by the following approximate recovery criterion (see also \cite{Sca17}): Output a set $\widehat{S} \subseteq \{1,\dotsc,p\}$ of cardinality $k$ such that
\begin{equation}
    \max\big\{ |\widehat{S} \setminus S|, |S \setminus \widehat{S}| \big\} \le \alpha k \label{eq:partial}
\end{equation}
for some $\alpha \in (0,1)$.  The above-mentioned result holds for arbitrarily small $\alpha > 0$, as long as it is bounded away from zero as $p \to \infty$.  

On the other hand, the lower bounds for approximate recovery in \cite{Sca15b,Sca17} only state that in order to attain \eqref{eq:partial} for fixed $\alpha \in (0,1)$, it is necessary that $n \ge (1-\alpha)\big(k \log_2\frac{p}{k}\big) (1+o(1))$.  This suggests that as $\alpha$ increases, the constant factor in the number of tests could be reduced.  We will show that the preceding lower bound can in fact be matched  with an upper bound using a suitably-chosen non-Bernoulli design, whereas Bernoulli designs can fail even for $\alpha$ close to one when $n$ is slightly smaller than $k \log_2\frac{p}{k}$.


While our discussion thus far focuses on Bernoulli designs, in the case of exact recovery, improved bounds have been shown for a different random test design based on {\em near-constant tests-per-item} \cite{Joh16,Coj19,Coj19a}, in particular permitting $n = \big( k \log_2 \frac{p}{k} \big) (1+o(1))$ for all $k = O(p^{0.409})$ (improving on $O(p^{1/3})$).  However, upon moving to approximate recovery with parameter $\alpha$, both designs attain the threshold $n = \big( k \log_2 \frac{p}{k} \big) (1+o(1))$ in the limit as $\alpha \to 0$ \cite{Sca17,Hah19}, suggesting that there is less to be gained via the near-constant tests-per-item design under relaxed recovery criteria.  Nevertheless, extending our results to this design may be an interesting direction for future work.

Our work is inspired by recent studies of the all-or-nothing behavior of sparse linear regression under i.i.d.~Gaussian measurements; see \cite{Gam17} for a study of the maximum-likelihood estimator, and \cite{Ree19,Ree19a} for general estimators.  While group testing can be viewed as a non-linear Boolean counterpart to sparse linear regression \cite{Gil08,Ati12}, and our work will adopt the same high-level approach as \cite{Ree19}, the details will be very different.  


\subsection{Overview of the Paper}  \label{sec:overview}

As hinted above, in this paper, our main goal is to investigate the question of when the following mild recovery requirement is possible:
\begin{itemize}
    \item {\bf (Weak recovery)} Can we find a set $\hatS$ of size $k$ such that $|S \cap \hatS| \ge \delta k$ for small $\delta > 0$ with some non-zero constant probability?
\end{itemize}
The study of this goal essentially asks whether we can learn even a small amount of information from the test outcomes.  As a result, any hardness result (lower bound on the number of tests) under this criterion serves as a much stronger claim compared to a hardness result for exact recovery.   

While our main focus is on weak recovery, we will additionally address the following two recovery goals that are also much milder than exact recovery:
\begin{itemize}
    \item {\bf (Weak detection)} Can we perform a hypothesis test on $(\bX,\bY)$ to distinguish between the above group testing model and the ``null model'' in which $\bY$ is independent of $\bX$?
    \item {\bf (Identify a definite defective)} Can we identify just a single defective item, i.e., output a single index $\calI \in \{1,\dotsc,p\}$ with certainty that $\calI \in S$?  (Here we also allow ``detected errors'', in which the decoder declares that it is uncertain.)
\end{itemize}
We show in Section \ref{sec:positive} that these goals can be achieved with very few tests using suitably-chosen non-Bernoulli test designs.  On the other hand, for Bernoulli designs (studied in Section \ref{sec:bernoulli}), we will see that in sufficiently sparse regimes, these criteria all require essentially the same number of tests as exact recovery.

The main reason that we consider weak detection is that is serves as a useful stepping stone to establishing our negative result for weak recovery under Bernoulli designs.  Identifying a definite defective is also not our central focus, but its negative result comes almost for free via our analysis.  


Before proceeding, we briefly pause to discuss our emphasis on Bernoulli designs, despite our results demonstrating that they can be inferior to alternative designs under the relaxed recovery criteria.  The justification for doing so is that Bernoulli designs (and other related unstructured random designs) are widespread and extensively studied in the literature \cite{Ald19}, and thus serve as a standard ``go-to'' design.  As a result, it is essential to not only identify the cases that they succeed, but also understand their limitations.

\section{Positive Results for Bernoulli and non-Bernoulli Designs} \label{sec:positive}

To set the stage for our negative results on Bernoulli designs, we start by providing several positive (i.e., achievability) results for weak recovery, weak detection, and identifying a definite defective, focusing primarily on non-Bernoulli designs. 

\subsection{Asymptotically Optimal Approximate Recovery} \label{sec:opt_approx}

As discussed in Section \ref{sec:related}, the lower bounds for approximate recovery in \cite{Sca15b,Sca17} state that in order to attain
\begin{equation}
    \max \big\{ | \hatS \setminus S| , | S \setminus \hatS| \big\}\leq \alpha k \label{eq:lb}
\end{equation}
for fixed $\alpha \in (0,1)$, it is necessary that $n \geq (1-\alpha)(k \log_2 \frac{p}{k}) (1 - o(1))$.  The following result shows that one can in fact attain a matching upper bound for general test designs.

\begin{theorem} \label{genapprox1}
    {\em (Positive Result for Approximate Recovery)}
    Consider the probabilistic group testing problem, and for fixed $\alpha \in (0,1)$, suppose that $n \geq (1 + \eta) (1 - \alpha) \log_2 \binom pk$ for some $\eta > 0$. Then, when $k \to \infty$ with $k = o(p)$ as $p \to \infty$,  there exists a non-adaptive test design and decoder that outputs an estimate $\hatS$ of $S$ satisfying the following as $p \to \infty$:
    \begin{equation}
        \bbP \left[ \max \big\{ | \hatS \setminus S| , | S \setminus \hatS|\big\} \leq \alpha k \right] \to 1 .
    \end{equation}
    Moreover, it can be assumed that $\hatS$ has cardinality exactly $k$.
\end{theorem}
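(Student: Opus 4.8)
The plan is to reduce the task to (near-)exact recovery on a smaller \emph{subpopulation}, exploiting the fact that under \eqref{eq:lb} we are permitted to ignore a $\Theta(\alpha)$-fraction of the items outright. First I would fix a small constant $\eps \in (0, \eta(1-\alpha))$, set $\beta = 1 - \alpha + \eps$ and $p' = \ceil{\beta p}$, and (without loss of generality) take the retained index set to be $\Gamma = \{1,\dots,p'\}$. The design I would use places \emph{no} item outside $\Gamma$ in any test; then the items in $\{1,\dots,p\}\setminus\Gamma$ have no effect on the outcomes, and $(\bX,\bY)$ is exactly the transcript of a group testing instance on the $p'$ items of $\Gamma$. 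Writing $k' = |S \cap \Gamma|$, since $S$ is uniform the count $k'$ is Hypergeometric with mean $\beta k$, so a standard concentration bound gives $k' = \beta k (1+o(1))$ with probability approaching $1$, and I would condition on this event.

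On $\Gamma$ I would simply invoke the approximate-recovery achievability result discussed in Section~\ref{sec:related}, which for any fixed distortion parameter $\alpha_2>0$, however small, attains \eqref{eq:lb} with high probability for all $k' \to \infty$ with $k' = o(p')$ (both of which hold here), using $\log_2\binom{p'}{k'}(1+o(1))$ tests. The crux is that, because the defective density is preserved ($k'/p' = (k/p)(1+o(1))$), this budget matches the target rate: using $\log_2\binom{p}{k} = k\log_2\tfrac{p}{k}\,(1+o(1))$ for $k=o(p)$,
\begin{equation}
  \log_2\binom{p'}{k'} = \beta k \log_2 \frac{p'}{k'}\,(1+o(1)) = \beta \log_2\binom{p}{k}\,(1+o(1)) = (1-\alpha+\eps)\log_2\binom{p}{k}\,(1+o(1)).
\end{equation}
Since $\eps < \eta(1-\alpha)$, the right-hand side is at most $(1+\eta)(1-\alpha)\log_2\binom{p}{k} \le n$ for all large $p$, so the available tests suffice. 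The sub-decoder then yields $\hatS_\Gamma \subseteq \Gamma$ of cardinality $k'$ with $\max\{|\hatS_\Gamma \setminus S|,\, |(S\cap\Gamma)\setminus \hatS_\Gamma|\} \le \alpha_2 k'$ with high probability, and I would form $\hatS$ by padding $\hatS_\Gamma$ with $k - k'$ arbitrary indices from outside $\Gamma$, so that $|\hatS| = k$ exactly.

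It then remains to account for the two independent error sources. The false negatives $S \setminus \hatS$ comprise the defectives in $\Gamma$ missed by the sub-decoder (at most $\alpha_2 k'$) together with \emph{all} defectives outside $\Gamma$ (exactly $k-k' = (\alpha-\eps)k\,(1+o(1))$), while the false positives $\hatS \setminus S$ number at most $\alpha_2 k'$ from the sub-decoder plus the $k-k'$ padding indices; choosing $\alpha_2 < \eps/\beta$ makes both totals at most $\alpha k$ for all large $p$, and a union bound over the concentration event and the sub-decoder's success event delivers \eqref{eq:lb} with probability tending to $1$. I expect the only genuinely substantive step to be the rate identity above — namely, verifying that restricting to a $\beta$-fraction subpopulation costs exactly a factor $\beta$ in the counting bound, which is what converts the available budget into the required one — after which the argument reduces to invoking the already-established small-distortion achievability result and book-keeping the errors.
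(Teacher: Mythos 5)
Your proposal is correct and follows essentially the same route as the paper's proof: restrict the design to a retained subpopulation of size $(1-\alpha+\eps)p$, use Hypergeometric concentration to get $k'=(1-\alpha+\eps)k(1+o(1))$, invoke the known small-distortion achievability result on the subproblem (whose test budget shrinks by exactly the retained fraction), and pad/book-keep the errors. The only point you gloss over is that $k'$ is random and not known exactly to the designer/decoder, so you must invoke the variant of the sub-result that tolerates knowing the number of defectives only up to a $1+o(1)$ factor (the paper does this via the second part of its Lemma~\ref{genapprox2}); with that citation in place your argument is complete.
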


By letting $\alpha$ be arbitrarily close to one, this result establishes the following corollary.

\begin{corollary}
    {\em (Positive Result for Weak Detection)}
    Under the preceding setup, when $k \to \infty$ with $k = o(p)$ as $p \to \infty$, as long as $k = \Omega\big( k\log \frac{p}{k} \big)$ (with any implied constant), there exists a non-adaptive test design and decoder that outputs an estimate $\hat{S}$ of cardinality $k$ satisfying $|\hat{S} \cap S| = \Omega(k)$ with probability approaching one.
\end{corollary}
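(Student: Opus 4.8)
The plan is to obtain this corollary as the $\alpha \uparrow 1$ endpoint of Theorem \ref{genapprox1}. The crucial observation is that the approximate-recovery guarantee already lower-bounds the overlap of interest: whenever the decoder of Theorem \ref{genapprox1} returns a set $\hat{S}$ with $|\hat{S}| = k$ and $\max\{|\hat{S}\setminus S|,|S\setminus\hat{S}|\} \le \alpha k$, we have
\begin{equation}
    |\hat{S}\cap S| = |S| - |S\setminus\hat{S}| \ge k - \alpha k = (1-\alpha)k. \label{eq:overlap}
\end{equation}
Hence, for any fixed $\alpha \in (0,1)$, the recovered overlap is already a positive constant fraction of $k$, and the entire task reduces to checking that a suitably small choice of $1-\alpha$ makes the premise $n \ge (1+\eta)(1-\alpha)\log_2\binom{p}{k}$ compatible with the assumed scaling of the number of tests.

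To this end I would first write the hypothesis explicitly as $n \ge c\, k\log\frac{p}{k}$ for some constant $c>0$; since a smaller constant is only a weaker assumption, we may take $c \le 2$ without loss of generality. After converting to base-two logarithms (the base merely rescales $c$), the assumption $k = o(p)$ forces $p/k \to \infty$, so the standard estimate $\log_2\binom{p}{k} = \big(k\log_2\frac{p}{k}\big)(1+o(1))$ yields $\log_2\binom{p}{k} \le 2\,k\log_2\frac{p}{k}$ for all sufficiently large $p$.

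Next I would instantiate Theorem \ref{genapprox1} with $\eta = 1$ and $\alpha = 1 - \frac{c}{4} \in (0,1)$, so that $(1+\eta)(1-\alpha) = \frac{c}{2}$. For all large $p$ this gives
\begin{equation}
    (1+\eta)(1-\alpha)\log_2\binom{p}{k} \le \frac{c}{2}\cdot 2\,k\log_2\frac{p}{k} = c\, k\log_2\frac{p}{k} \le n,
\end{equation}
so the premise of Theorem \ref{genapprox1} holds. The theorem then supplies a non-adaptive design and decoder with $|\hat{S}| = k$ and $\max\{|\hat{S}\setminus S|,|S\setminus\hat{S}|\} \le \alpha k$ with probability approaching one; combining this with \eqref{eq:overlap} gives $|\hat{S}\cap S| \ge (1-\alpha)k = \frac{c}{4}\,k = \Omega(k)$ with probability approaching one, as required.

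I expect no genuine obstacle here, since all the substance lives in Theorem \ref{genapprox1}: this corollary is precisely its degenerate $\alpha \to 1$ limit, in which the test count collapses to an arbitrarily small constant multiple of $k\log\frac{p}{k}$ while the guaranteed overlap remains $\Theta(k)$. The only care needed is the constant-chasing, namely keeping $\alpha$ strictly inside $(0,1)$ and absorbing the $(1+o(1))$ slack in $\log_2\binom{p}{k}$ for large $p$. This weak-recovery guarantee is the positive counterpart to the Bernoulli negative results proved later in the paper: it shows that a constant fraction of $S$ can be identified with arbitrarily few tests (in the sense of the implied constant) once the Bernoulli design is abandoned.
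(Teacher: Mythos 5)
Your proposal is correct and is exactly the paper's argument: the corollary is stated as an immediate consequence of Theorem \ref{genapprox1} obtained ``by letting $\alpha$ be arbitrarily close to one,'' and your constant-chasing (choosing $\alpha = 1-\tfrac{c}{4}$, absorbing the $1+o(1)$ in $\log_2\binom{p}{k}$, and noting $|\hat{S}\cap S| \ge (1-\alpha)k$) simply makes that one-line derivation explicit. You also correctly read the hypothesis ``$k = \Omega(k\log\frac{p}{k})$'' as the evident typo for ``$n = \Omega(k\log\frac{p}{k})$.''
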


To prove Theorem \ref{genapprox1}, we will use the previous best-known positive result on approximate recovery as a stepping stone.  This is stated in the following lemma, whose main statement comes from \cite{Sca15b,Sca17}, with the second part regarding approximately-known $k$ instead coming from \cite[App.~B]{Sca18}. 

\begin{lemma} \label{genapprox2}
    {\em (Existing Positive Results for Approximate Recovery \cite{Sca15b,Sca17,Sca18})}
    Consider the probabilistic group testing problem with Bernoulli
    random testing using the choice of $\nu$ in (1), and suppose that $n \geq (1 + \eta) \log_2 \binom pk$ for some $\eta > 0$. Then, for any fixed $\alpha \in (0,1)$, when $k = o(p)$ as $p \to \infty$, there exists a decoder that outputs an estimate $\hatS$ of $S$ satisfying the following as $p \to \infty$:
    \begin{equation}
        \bbP\left[ \max \big\{ | \hatS \setminus S| , | S \setminus \hatS|\big\} \leq \alpha k \right] \to 1.
    \end{equation}
    Furthermore, this result remains true even when the decoder does not know the exact value of $k$ but instead only knows some quantity $\bark$ satisfying $\bark = k(1+o(1))$, and the $\nu/k$ test-inclusion probability is replaced by $\nu / \bark$.
\end{lemma}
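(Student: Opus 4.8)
The plan is to recover the statement from the information-theoretic achievability framework for group testing in \cite{Sca15b,Sca17}, and then isolate the computation that makes the relaxation to approximate recovery so favorable. I would fix the i.i.d.\ Bernoulli design with inclusion probability $\nu/k$ and analyze a maximum-likelihood / information-density threshold decoder that searches over all $\binom{p}{k}$ candidate sets of size $k$. The error events are indexed by the overlap deficit: for $\ell\in\{1,\dots,k\}$, a \emph{type-$\ell$ error} is the event that the decoder outputs a set differing from $S$ in exactly $\ell$ positions. Since approximate recovery with parameter $\alpha$ fails only if a type-$\ell$ error with $\ell\geq\alpha k$ occurs, it suffices to control these events alone, which is exactly where the relaxation pays off.

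The engine is a per-type error bound. For each $\ell$ I consider the conditional mutual information
\begin{equation}
    I_\ell = I\big(Y; X_{S_{\mathrm{dif}}} \,\big|\, X_{S_{\mathrm{eq}}}\big),
\end{equation}
where $S_{\mathrm{eq}}$ and $S_{\mathrm{dif}}$ are the shared and swapped-out parts of a competing set ($|S_{\mathrm{dif}}|=\ell$). A change-of-measure plus Chernoff bound on the information density, as in \cite{Sca15b,Sca17}, shows that type-$\ell$ errors vanish provided $n\,I_\ell \geq (1+o(1))\log_2\!\big(\binom{k}{\ell}\binom{p-k}{\ell}\big)$, the count $\binom{k}{\ell}\binom{p-k}{\ell}$ being the number of type-$\ell$ competitors. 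A direct computation using $(1-\nu/k)^k=\tfrac12$ (so that $(1-\nu/k)^m=2^{-m/k}$) gives, with $\beta:=\ell/k$,
\begin{equation}
    I_\ell = \tfrac12\, 2^{\beta}\, \Hb\!\big(1-2^{-\beta}\big) \text{ bits},
\end{equation}
since the outcome is informative about $X_{S_{\mathrm{dif}}}$ only on the probability-$(1-\nu/k)^{k-\ell}=\tfrac12 2^{\beta}$ event that no shared defective lands in the test, where it is then $\mathrm{Bern}\big(1-2^{-\beta}\big)$.

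The crux is then a purely analytic optimization. For \emph{fixed} $\beta\in[\alpha,1]$ and $k=o(p)$, the competitor count satisfies $\log_2\!\big(\binom{k}{\ell}\binom{p-k}{\ell}\big)=\beta\,k\log_2\frac{p}{k}\,(1+o(1))$, the $\binom{p-k}{\ell}$ factor dominating because $\log_2\frac{p}{k}\to\infty$. Combining with $k\log_2\frac{p}{k}=\log_2\binom{p}{k}(1+o(1))$, the type-$\ell$ condition reads $n\geq g(\beta)\log_2\binom{p}{k}(1+o(1))$ with
\begin{equation}
    g(\beta)=\frac{2\beta}{2^{\beta}\,\Hb\!\big(1-2^{-\beta}\big)}.
\end{equation}
One checks that $g(\beta)\leq 1$ on $(0,1]$ with equality only at $\beta=1$, so $\max_{\beta\in[\alpha,1]}g(\beta)=1$ and the binding case is the ``fully disjoint'' error $\ell=k$. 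Hence $n\geq(1+\eta)\log_2\binom{p}{k}$ beats every type-$\ell$ threshold for $\ell\geq\alpha k$ by a factor $1+\eta$, and a union bound over the at most $k$ relevant values of $\ell$ sends the error probability to zero. Crucially, the small-$\ell$ (exact-recovery) terms, which for Bernoulli designs can exceed $\log_2\binom{p}{k}$ in dense regimes, are precisely the ones excluded by $\ell\geq\alpha k$; this is why the relaxation buys nothing beyond the counting bound yet costs nothing either.

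I expect the main obstacle to be the \emph{uniformity} of the per-type concentration step: the Chernoff/change-of-measure estimate and its $o(1)$ corrections must hold simultaneously over all $\ell\in[\alpha k,k]$ and over the exponentially many competitors, so that a single union bound closes and the $(1+\eta)$ slack is enough. For the ``furthermore'' claim with approximately-known $k$, I would replace the design probability $\nu/k$ by $\nu/\bark$ with $\bark=k(1+o(1))$ and track its effect on $(1-\nu/\bark)^{\bark}$ and on $I_\ell$; since every governing quantity is continuous in this ratio, they shift only by $(1+o(1))$ factors and the threshold computation above is unchanged, recovering the statement of \cite[App.~B]{Sca18}.
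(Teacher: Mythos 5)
The paper does not actually reprove this lemma---it imports it verbatim from \cite{Sca15b,Sca17}, with the approximately-known-$k$ extension from \cite[App.~B]{Sca18}---and your reconstruction follows essentially the same route as those cited proofs: the information-density threshold decoder, the decomposition of error events by overlap deficit $\ell$, the per-type condition $n I_\ell \geq (1+o(1))\log_2\big(\binom{k}{\ell}\binom{p-k}{\ell}\big)$, and the computation $I_\ell = \frac{1}{2}2^{\beta}\Hb\big(1-2^{-\beta}\big)$ with $g(\beta) = \frac{2\beta}{2^{\beta}\Hb(1-2^{-\beta})} \leq 1$ and equality only at $\beta = 1$, all of which check out (the inequality reduces to $(1-x)\log_2\frac{1}{1-x} \geq x\log_2\frac{1}{x}$ for $x = 2^{-\beta}\in[\tfrac12,1)$). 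Your flagged caveat---uniform concentration of the information density over $\ell \in [\alpha k, k]$ and the exponentially many competitors---is indeed the genuine technical content that your sketch defers to the cited framework, but as a proof outline the proposal is correct and matches the source material.
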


Theorem \ref{genapprox1} reduces the number of tests in Lemma \ref{genapprox2} by a multiplicative factor of $1-\alpha$, and provides an asymptotically optimal result (including the constant).  In addition, this result demonstrates that weak recovery is possible whenever $n = \Omega\big(k \log_2 \frac{p}{k}\big)$ with an arbitrarily small implied constant.  In contrast, we will show in Theorem \ref{thm:weak} below that Bernoulli testing requires the implied constant to be one, and hence, the two differ by an arbitrarily large constant factor.

\begin{proof}[Proof of Theorem \ref{genapprox1}]
    The idea of the proof is straightforward:  We ignore slightly less than a fraction $\alpha$ of the items, and use Bernoulli testing to achieve approximate recovery on the items that were not ignored.

    More formally, fix $\alpha' \in (0,\alpha)$, and consider discarding $\alpha' p$ items chosen uniformly at random, leaving $p' = (1 - \alpha')p$ items remaining.  By Hoeffding's inequality for the Hypergeometric distribution \cite{Hoe63} and the assumption $k \to \infty$, we have with probability $1 - o(1)$ that the number of remaining defectives $k'$ satisfies 
    \begin{equation}
            k' = (1-\alpha')k \cdot (1+o(1)). \label{eq:k'}
    \end{equation}
    We apply the second part of Lemma \ref{genapprox2} on this reduced problem, with $\bark = (1-\alpha')k$ and an approximate recovery parameter $\alpha''$ to be selected shortly.  While the number of defectives $k'$ in the reduced problem is random, we see from \eqref{eq:k'} that it is known up to a multiplicative factor of $1+o(1)$, as required in Lemma \ref{genapprox2}.  Then, the number of tests $n'$ required in Lemma \ref{genapprox2} satisfies the following:
    \begin{align}
        n' &= (1 + \eta') \log_2 \binom {p'}{k'} \\
            &= (1 + \eta') \log_2 \binom {(1-\alpha')p'}{(1-\alpha')k'(1+o(1))} \\
            &= \big(1 + \eta' + o(1)\big) (1-\alpha') \log_2 \binom {p}{k}\label{eq:n'}
    \end{align}
    for arbitrarily small $\eta' > 0$, where we used $\log{p \choose k} = \big(k \log \frac{p}{k}\big)(1+o(1))$ for $k = o(p)$.  In addition, in accordance with the lemma, the returned set $\hatS'$ of size $k'$ contains at least
    \begin{equation}
        (1 - \alpha'')k' = (1-\alpha'')(1-\alpha')k \cdot (1+o(1)) \label{eq:at_least}
    \end{equation}
    defective items, with probability approaching one.

    It suffices to let the final estimate $\hatS$ equal $\hatS'$; alternatively, if an estimate of size $k$ is sought, one can add $k - k'$ arbitrary items to $\hatS'$ to form $\hatS$.  In either case, taking $\alpha'$ arbitrarily close to $\alpha$ and $\alpha''$ arbitrarily close to $0$, \eqref{eq:at_least} ensures that $\hatS$ contains at least $(1-\alpha) k$ defective items, which implies $\max \big\{ | \hatS \setminus S| , | S \setminus \hatS| \big\}\leq \alpha k$ due to the fact that $|S| = k$.  In addition, since $\eta' > 0$ is arbitrarily small, the number of tests in \eqref{eq:n'} simplifies to $(1+\eta)(1-\alpha) \log_2 {p \choose k}$ for arbitrarily small $\eta > 0$.

\end{proof}

We note that while the preceding analysis still uses Bernoulli testing as a subroutine via Lemma \ref{genapprox2}, the full $n \times p$ test matrix is not i.i.d.~Bernoulli, as a fraction $\alpha'$ of its columns are set to zero.  Hence, we still consider this to be a non-Bernoulli test design.

{\bf Discussion.}
Since the proof of Theorem \ref{genapprox1} is based on ignoring a fraction of the items, it amounts to a technique that immediately gives up on exact recovery, or ``cuts its losses'' from an early stage.  This raises the interesting question as to whether such an approach is actually necessary to obtain the bound in Theorem \ref{genapprox1}. 

To appreciate this distinction, note that Hwang's {\em adaptive} generalized binary splitting algorithm \cite{Hwa72} works by repeatedly identifying a single defective using at most $\big( \log_2 \frac{p}{k}\big)(1+o(1))$ tests, and then removing it from further consideration.  Hence, exact recovery is guaranteed with $n=\big( k \log_2 \frac{p}{k}\big)(1+o(1))$, but even if the procedure is stopped after $(1-\alpha) \big( k \log_2 \frac{p}{k}\big)(1+o(1))$ tests, one will have already identified $(1-\alpha) k$ defective items.  In this sense, Hwang's adaptive algorithm is {\em universally optimal} with respect to the approximate recovery parameter $\alpha$,\footnote{Note that the lower bound stated following \eqref{eq:lb} also holds for adaptive algorithms.} and the algorithm degrades gracefully as the number of tests decreases below the exact recovery threshold.

In contrast, the non-adaptive designs that we have considered do not enjoy such universality.  Under a Bernoulli design with $k = o(p)$, we can achieve approximate recovery with arbitrarily small $\alpha$ when $n = \big( k \log_2 \frac{p}{k} \big)(1+o(1))$ (Lemma \ref{genapprox2}), or even exact recovery if $k = O(p^{1/3})$ \cite{Sca15b}, but we are prone to complete failure for smaller $n$, at least in sufficiently sparse regimes (see Theorem \ref{thm:weak} below).  Alternatively, ignoring roughly a fraction $\alpha$ of the items leads to $\alpha$-approximate recovery when $n = (1-\alpha) \big( k \log_2 \frac{p}{k} \big)(1+o(1))$, but one retains this guarantee {\em and no better} regardless of how much $n$ is increased.  
Hence, it remains an interesting question for future work as to whether there exists a gracefully degrading (and ideally universally optimal) test matrix design in the non-adaptive setting.

\subsection{A Trivial Strategy for Weak Detection} \label{sec:weak_trivial}

We first describe the weak detection problem in more detail.  The goal is to distinguish between two joint distributions on the pair $(\bX,\bY)$ for some specified distribution $P(\bX)$:
\begin{itemize}
    \item Under distribution $P$, the X-marginal is $P(\bX)$, and the joint distribution $P(\bX,\bY)$ is deduced by deterministically computing $\bY$ from $\bX$ via the group testing model.
    \item Under distribution $Q$, the $\bX$ and $\bY$ marginals match those of $P$, but $\bX$ and $\bY$ are independent, i.e., $Q(\bX,\bY) = P(\bX)P(\bY)$.
\end{itemize}
This is a binary hypothesis testing problem.  The distribution $Q$ corresponds to ``completely uninformative outcomes'', so intuitively, if we cannot reliably distinguish between $P$ and $Q$, then we can view the group tests (under the distribution $P$) as being highly uninformative.

As hinted above, for Bernoulli designs, the main reason that we consider weak detection is as a stepping stone to providing a negative result for weak recovery (see Section \ref{sec:bernoulli}).  It turns out that if we consider weak detection as a standalone problem with general designs, then a trivial strategy succeeds with very few tests.  Formally, we have the following.

\begin{proposition} \label{prop:pos_weak_det}
    {\em (Trivial Strategy for Weak Detection)}
    For any number of tests $n$, there exists some distribution $P(\bX)$ such that given $(\bX,\bY)$, the group testing joint distribution $P(\bX,\bY)$ can be distinguished from $Q(\bX,\bY) := P(\bX)P(\bY)$ with zero error probability under $P$ and $1-2^{-n}$ error probability under $Q$.
\end{proposition}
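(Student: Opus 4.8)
The plan is to exhibit a deterministic-outcome design together with an exact consistency test. Concretely, I would take $P(\bX)$ to be the distribution under which the $n$ rows of $\bX$ are drawn independently, each row being the all-ones row $\bone \in \{0,1\}^p$ (every item included in that test) with probability $\frac12$, and the all-zeros row $\bzero$ (no item included) with probability $\frac12$. This is a valid distribution over test matrices, uses exactly the prescribed $n$ tests, and crucially is not required to depend on the defective set $S$.

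First I would record the key structural observation that makes the outcomes decouple from $S$. Since $|S| = k \ge 1$, under the group testing model $P$ a full test ($X^{(i)} = \bone$) always contains at least one defective and hence yields $Y^{(i)} = 1$, whereas an empty test ($X^{(i)} = \bzero$) yields $Y^{(i)} = 0$. Therefore, with probability one under $P$, the outcome vector is a deterministic function $\bY = f(\bX)$, where $f$ sends each all-ones row to $1$ and each all-zeros row to $0$; in particular there is no residual randomness from $S$. Because each $Y^{(i)}$ is thus an independent fair coin flip (the same coin that decides whether row $i$ is full or empty), the $\bY$-marginal under $P$ is exactly uniform on $\{0,1\}^n$.

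The decision rule is then the exact consistency check: declare the group testing model $P$ iff $\bY = f(\bX)$, and declare the null model $Q$ otherwise. Under $P$ the check always passes by the previous paragraph, so the probability of (erroneously) declaring $Q$ is $0$. Under $Q = P(\bX)P(\bY)$, the pair $\bX$ and $\bY$ are independent with $\bY$ uniform on $\{0,1\}^n$; hence, conditioned on any realization of $\bX$, the event $\bY = f(\bX)$ singles out one of the $2^n$ equally likely values of $\bY$ and so has probability exactly $2^{-n}$. Consequently the test correctly declares $Q$ with probability $2^{-n}$, i.e. it errs (declaring $P$) with probability $1 - 2^{-n}$, as claimed.

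There is essentially no hard step here; the entire content is the observation that full and empty tests render the outcomes a deterministic, entropy-maximizing function of the design, thereby severing $\bY$ from the random defective set while keeping the $\bY$-marginal uniform. The only assumption invoked is $k \ge 1$. The one point I would take care to state explicitly is the convention that ``error under $P$'' and ``error under $Q$'' refer to the two conditional mis-declaration probabilities, which the consistency rule makes $0$ and $1 - 2^{-n}$ respectively, matching the proposition.
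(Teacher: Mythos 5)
Your proposal is correct and follows essentially the same route as the paper: the same design (each test is all items with probability $\tfrac12$, no items otherwise) and the same exact-consistency decision rule, yielding zero error under $P$ and $1-2^{-n}$ under $Q$. The only difference is that you spell out the deterministic map $\bY = f(\bX)$ and the uniformity of the $\bY$-marginal a bit more explicitly, which the paper leaves implicit.
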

\begin{proof}
    Consider letting each test independently contain all items with probability $\frac{1}{2}$ and no items with probability  $\frac{1}{2}$.  Hence, each outcome is independent and equiprobable on $\{0,1\}$.  Consider a detection algorithm that declares $P$ whenever the ones in $\bY$ exactly match the rows of ones in $\bX$, and declares $Q$ otherwise.  Then, under $P$, success is guaranteed due to the group testing model being noiseless.  Under $Q$, since the test outcomes are uniformly random, the probability of producing the correct outcomes is $2^{-n}$, proving the proposition.
\end{proof}

This result suggests that weak detection is of limited interest as a standalone recovery criterion in general.  Nevertheless, this finding arguably strengthens our subsequent negative results for Bernoulli designs (Section \ref{sec:bernoulli}), in the sense of showing that weak detection is not attained despite being an ``extremely easy'' goal.


\subsection{Weak Detection with Bernoulli Designs} \label{sec:weak_pos}

Our negative results for Bernoulli testing in Section \ref{sec:bernoulli} will demonstrate failure in sufficiently sparse regimes (e.g., $k = {\rm  poly}(\log p)$) when the number of tests is slightly below $\log_2{p \choose k}$.   On the other hand, fairly simple detection strategy can be used to attain the following positive result under Bernoulli testing when $k \gg p^{1/3}$ and $k = o(p)$.

\begin{theorem} \label{verymain3} 
    {\em (Positive Result for Weak Detection with Bernoulli Designs)}
    Consider the probabilistic group testing problem with Bernoulli random testing using the choice of $\nu$ in \eqref{eq:choice_nu}, and suppose that $n = (1-\eta)\log_2{p\choose k}$ for some $\eta \in (0,1)$. Then, under the condition that $k = o(p)$ and $k=\Omega\big(p^{\frac{1+\eta}{3+\eta}+\delta}\big)$ for some (arbitrarily small) fixed constant $\delta \in (0,1)$, there exists a binary hypothesis testing scheme  that succeeds in distinguishing $P(\bX,\bY)$ from $Q(\bX,\bY) := P(\bX)P(\bY)$ with probability approaching one as $p \to \infty$.
\end{theorem}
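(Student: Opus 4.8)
The plan is to base the test on the number of columns of $\bX$ that are \emph{consistent} with the outcomes, namely
\begin{equation}
    Z := \big|\{ j \in \{1,\dotsc,p\} : X_j^{(i)} \le Y^{(i)} \text{ for all } i \}\big|, \nonumber
\end{equation}
i.e.~the set of ``possible defectives'' produced by the COMP decoder. Writing $n_0 := \sum_{i=1}^n (1-Y^{(i)})$ for the number of negative tests, the event $\{X_j \le Y\}$ is exactly $\{X_j^{(i)} = 0 \text{ on every negative test}\}$. The choice of $\nu$ in \eqref{eq:choice_nu} makes each $Y^{(i)}$ marginally $\Ber(1/2)$; since the per-test randomness is independent given $S$, the vector $\bY$ is in fact i.i.d.~$\Ber(1/2)$ under $P$, and by construction it has the same marginal under $Q$. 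Hence $n_0 \sim \mathrm{Binomial}(n,1/2)$ under both hypotheses, and I would run the whole argument conditionally on $\bY$ (equivalently on $n_0$), using a data-dependent threshold.

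First I would identify the conditional law of $Z$. Set $q := (1-\nu/k)^{n_0} = 2^{-n_0/k}$. Under $Q$ every column is independent of $\bY$ and consistent with probability $q$, so $Z \mid \bY \sim \mathrm{Binomial}(p,q)$. Under $P$ each defective column is consistent with probability one, while the $p-k$ non-defective columns are i.i.d.~$\Ber(\nu/k)$ and do not influence any outcome, hence remain independent of $\bY$; this gives $Z \mid \bY \dequal k + \mathrm{Binomial}(p-k,q)$, a law that is identical for every size-$k$ defective set and is therefore unchanged after averaging over the posterior of $S$. Consequently the conditional means differ by the additive signal $\Delta := \bbE_P[Z\mid\bY] - \bbE_Q[Z\mid\bY] = k(1-q)$, while both conditional variances are at most $pq(1-q) \le pq$.

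Next I would control $q$ and bound the error. By Hoeffding's inequality, $|n_0 - n/2| = O(\sqrt{n\log p})$ with probability $1-o(1)$ under both hypotheses, and on this event $q$ equals $(p/k)^{-(1-\eta)/2}$ up to a $p^{o(1)}$ factor, using $n = (1-\eta)\log_2\binom{p}{k}$, $\log_2\binom{p}{k} = \big(k\log_2\frac{p}{k}\big)(1+o(1))$, and $2^{O(\sqrt{n\log p}/k)} = p^{o(1)}$ in the stated regime; in particular $q \to 0$ and $\Delta = k(1-o(1))$. Taking the $\bY$-dependent threshold $\tau := pq + \tfrac{1}{2}\Delta$ and declaring $P$ iff $Z > \tau$, Chebyshev's inequality gives, for every typical $n_0$,
\begin{equation}
    \max\big\{\bbP_P(Z \le \tau \mid \bY),\, \bbP_Q(Z > \tau \mid \bY)\big\} \le \frac{pq(1-q)}{(\Delta/2)^2} = \frac{4pq}{k^2(1-q)}. \nonumber
\end{equation}
Writing $k = p^\beta$, so that $p/k = p^{1-\beta}$ and $q = (p/k)^{-(1-\eta)/2 + o(1)}$, the right-hand side is $p^{\,1-(1-\beta)(1-\eta)/2-2\beta+o(1)}$, which vanishes whenever $2\beta > 1-(1-\beta)(1-\eta)/2$, i.e.~$\beta > \frac{1+\eta}{3+\eta}$; the strict slack $\delta$ in $k=\Omega\big(p^{(1+\eta)/(3+\eta)+\delta}\big)$ absorbs the $o(1)$. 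Averaging over the typical window of $n_0$ and adding the $o(1)$ mass of its complement yields overall success probability tending to one.

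The delicate point that I would emphasize is why the threshold must adapt to $\bY$. If instead one used a fixed threshold, the marginal spread of $n_0$ around $n/2$ would propagate through $q = 2^{-n_0/k}$ into an $\Theta\big(\mu^2 \log\frac{p}{k}\,/\,k\big)$ nuisance variance in $\bbE[Z\mid\bY]$ (where $\mu := pq$), dominating the binomial sampling variance $\Theta(\mu)$ and forcing the strictly stronger requirement $\beta > \frac{1+\eta}{2+\eta}$. Conditioning on $\bY$ eliminates this term, leaving only the sampling variance and hence the correct threshold $\frac{1+\eta}{3+\eta}$. The only remaining care is to check that $q$ stays within a $p^{o(1)}$ factor of $(p/k)^{-(1-\eta)/2}$ uniformly over the typical window of $n_0$, so that a single Chebyshev estimate applies throughout.
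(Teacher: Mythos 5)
Your proposal is correct and follows essentially the same route as the paper's proof: both count the columns of $\bX$ covered by (consistent with) $\bY$, condition on the number of negative tests lying in a typical window so that $q$ is pinned down to $(k/p)^{(1-\eta)/2+o(1)}$, compare $\mathrm{Binomial}(p,q)$ against $k+\mathrm{Binomial}(p-k,q)$ with a threshold midway between the conditional means, and arrive at the condition $k=\omega(\sqrt{pq})$, which yields the exponent $\frac{1+\eta}{3+\eta}$. The only difference is that you invoke Chebyshev where the paper uses the Berry--Esseen theorem; Chebyshev suffices here and is slightly simpler.
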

\begin{proof}
    The idea of is to perform weak detection based on the number of columns of $\bX$ that are {\em covered} by the test outcomes (i.e., the test outcome is one whenever the column entry is one).  The number of such columns is characterized by a binomial distribution under both the group testing model and the null model, and the distributions are shown to be statistically distinguishable under the conditions given.  The details are given in Appendix \ref{subB}.
\end{proof}

\subsection{A Simple Strategy for Identifying a Definite Defective} \label{sec:pos_dd}

We first describe the problem of {\em identifying a definite defective} in more detail.  We suppose that the decoder either outputs a single index $\calI \in \{1,\dotsc,p\}$ believed to be defective, or declares ``I don't know'' by outputting $\calI= 0$.  In the former scenario, we insist that $\calI$ must be defective (i.e., $\calI \in S$) with probability one, meaning that the only errors allowed are {\em detected errors} corresponding to $\calI= 0$.  This setup is partly motivated by the definite defectives algorithm for recovering the defective set \cite{Ald14a,Joh16}, as well as the notion of {\em zero undetected error capacity} in information theory \cite{For68}.

The following result shows that under a suitably-chosen non-Bernoulli test design, a single definitely-defective item can be reliably identified using only $O(\log\frac{p}{k})$ tests, representing a reduction by a factor of $k$ compared to the usual $O(k \log\frac{p}{k})$ scaling.

\begin{theorem} \label{thm:pos_dd}
    {\em (Positive Result for Identifying a Definite Defective)}
    Consider the probabilistic group testing problem, and suppose that $n \geq (1 + \eta)2c\log_2 \frac{p}{k}$ for some $\eta > 0$ and some positive integer $c > 0$. Then there exists a non-adaptive test design and decoder that outputs an estimate $\calI \in \{0,1,\dots,p\}$ of a definite defective (with $0$ representing ``I don't know'') satisfying
    \begin{equation}    
        \bbP[\calI > 0 \,\cap\, \calI \not\in S] = 0 ~~ \text{\rm and} ~~ \bbP[\calI = 0] \leq (1 - e^{-1} + o(1))^c
    \end{equation}
    as $k \to \infty$ and $p \to \infty$.
\end{theorem}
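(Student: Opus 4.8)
The plan is to split the $n$ tests into $c$ independent \emph{blocks} of roughly $2\log_2(p/k)$ tests each, and to design every block so that it either returns a provably-defective item or declares failure, succeeding with probability tending to $e^{-1}$. The overall decoder outputs the item returned by the first successful block, or $\calI = 0$ if all $c$ blocks fail. Since the blocks use independent design randomness and each succeeds with probability $e^{-1}+o(1)$ conditioned on the (fixed) defective set $S$, the probability that all $c$ fail is at most $(1-e^{-1}+o(1))^c$, which is exactly the claimed bound.

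For a single block I would first draw a random pool $A \subseteq \{1,\dots,p\}$ by including each item independently with probability $1/k$, so that the number of defectives in the pool satisfies $|A \cap S| \sim \mathrm{Binomial}(k,1/k) \to \mathrm{Poisson}(1)$, whence $\bbP[\,|A\cap S| = 1\,] \to e^{-1}$ as $k \to \infty$. I then assign the items of $A$ distinct binary labels in $\{0,1\}^m$ with $m = \lceil \log_2(2p/k)\rceil$; on the negligible-probability event $|A| > 2p/k$ (ruled out by concentration of $\mathrm{Binomial}(p,1/k)$) the block declares failure, and otherwise $2^m \ge |A|$ so distinct labels exist. For each bit position $j \in \{1,\dots,m\}$ I form two complementary tests: $T_j^{(1)}$ containing all items of $A$ whose $j$-th label bit is $1$, and $T_j^{(0)}$ containing those whose $j$-th bit is $0$. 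This uses $2m = 2\log_2(p/k)(1+o(1))$ tests per block, for a total of $2c\log_2(p/k)(1+o(1)) \le (1+\eta)2c\log_2(p/k) = n$, with room to spare.

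The decoding rule, and the reason it never errs, is the crux. Call a block's outcomes \emph{clean} if for every $j$ exactly one of $T_j^{(1)},T_j^{(0)}$ is positive. I would show a clean pattern occurs if and only if $A$ contains exactly one defective: if $|A \cap S| \ge 2$, two distinct labels differ in some bit $j_0$, forcing both $T_{j_0}^{(1)}$ and $T_{j_0}^{(0)}$ positive; while if $|A \cap S| = 0$, all tests are negative. Hence, conditioned on observing a clean pattern, the block is \emph{certain} that $A$ holds a single defective, and the positive test in each complementary pair spells out that defective's label bit-by-bit, so the block outputs the unique item of $A$ carrying that label. This item is defective with probability one, giving $\bbP[\calI > 0 \,\cap\, \calI \notin S] = 0$; any non-clean pattern (or the truncation event) yields ``I don't know.''

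Finally, to assemble the probability bound, note that conditioned on $S$ the events $\{|A^{(b)}\cap S| = 1\}$ are independent across $b$ because the pools use independent randomness, and each has probability $e^{-1}+o(1)$ (the $o(1)$ absorbing both the $\mathrm{Poisson}$ approximation error and the truncation event). Since this probability does not depend on which size-$k$ set $S$ is, taking the product and averaging over $S$ gives $\bbP[\calI = 0] \le (1-e^{-1}+o(1))^c$. The one step genuinely requiring care is the zero-error certification: verifying that a clean pattern \emph{rigorously} forces a single defective, so that the returned item is defective with probability exactly one rather than merely with high probability, is precisely what upgrades an ordinary recovery guarantee into the ``definite defective'' guarantee. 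The remaining ingredients — $\mathrm{Poisson}$ convergence of $\mathrm{Binomial}(k,1/k)$, concentration of $|A|$, and independence across blocks — are routine.
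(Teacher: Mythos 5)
Your proposal is correct and follows essentially the same route as the paper: a random pool with about one expected defective (succeeding with probability $e^{-1}+o(1)$), a binary-label-plus-complement test gadget (the paper's SAFFRON construction) that certifies with zero error whether the pool holds exactly one defective and, if so, spells out its identity, and $c$ independent repetitions to drive the failure probability to $(1-e^{-1}+o(1))^c$. The only cosmetic differences are that you sample the pool via independent inclusion with probability $1/k$ (Poisson limit, requiring a small truncation argument) where the paper draws a uniformly random set of exactly $p/k$ items, and that you phrase the certification per complementary pair where the paper counts the total number of positive tests; these are equivalent.
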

\begin{proof}
    We first consider the case $c = 1$.  Let $\calA$ be a uniformly random set of $\frac{p}{k}$ items.\footnote{We assume for simplicity that $\frac{p}{k}$ is an integer; the general case follows similarly by rounding.}  By the assumption that $k \to \infty$ and $k = o(p)$, it is straightforward to show that 
    \begin{equation}
        \bbP[|\calA \cap S| = 1] = e^{-1} + o(1). \label{eq:one_def}
    \end{equation}
    Indeed, the analogous claim is standard when each item is included in $\calA$ with probability $\frac{1}{k}$ \cite[Sec.~2.3]{Ald19}, and \eqref{eq:one_def} can then by understood by approximating the Hypergeometric distribution by the binomial distribution \cite{Soo96}.

    We proceed by describing a procedure from the SAFFRON algorithm of \cite{Lee15a} that is guaranteed to identify the single defective item in $\calA$ whenever $|\calA \cap S| = 1$, while also being able to identify with certainty whether $|\calA \cap S| = 1$ or $|\calA \cap S| \ne 1$.  This procedure uses $2v$ tests, where $v = \big\lceil \log_2 \frac{p}{k} \big\rceil = (1 + o(1))\log_2 \frac{p}{k}$.
        
    Number the items in $\calA$ from $0$ to $\frac{p}{k}-1$ in a fixed manner (e.g., maintaining the order that they take as items in $\{1,\dotsc,n\}$). For $i \in \big\{0, 1, \dots, \frac{p}{k}-1 \big\}$, let $\bb_i \in \{0,1\}^{2v}$ be a binary vector of length $2v$ and weight $v$ constructed as follows: The first $v$ entries are the number $i$ written in binary, and the last $v$ entries are the same, but with the $0$s and $1$s swapped. We then construct $2v$ tests, where test $j$ contains exactly the items corresponding to $i \in \calA$ for which the $j$-th entry of $\bb_i$ equals $1$.
    
    We now consider the following cases:
    \begin{itemize}
        \item If $\calA$ contains no defective items, then all $2v$ tests will be negative.  When this is observed, we set $\calI = 0$.
        \item If $\calA$ contains exactly one defective, then exactly $v$ of the tests will be positive.  When this is observed, we set $\calI$ to be item whose value $i \in \calA$ is spelled out (in binary) by the first $v$ test results.
        \item If $\calA$ contains two or more defectives, then more than $v$ of the tests will be positive.  When this is observed, we set $\calI = 0$.
    \end{itemize}
    The first and third cases ensure that we never erroneously set $\calI \ne 0$. In addition, we correctly identify a defective item in the second case, which occurs with probability $e^{-1} + o(1)$ due to \eqref{eq:one_def}.  This proves the theorem for $c = 1$.

    To handle $c > 1$, we simply repeat the preceding process $c$ times, drawing $\calA$ independently each time.  By doing so, we only fail if none of the sets $\calA$ contain exactly one defective item, which occurs with probability $(1-e^{-1} + o(1))^c$.
\end{proof}


We briefly mention that the $O\big(\log\frac{p}{k}\big)$ scaling in Theorem \ref{thm:pos_dd} cannot be improved further.  To see this, suppose by contradiction that a definite defective could be found with constant (non-zero) probability using $o\big(\log\frac{p}{k}\big)$ tests.  By repeating this procedure with uniformly random shuffling of the items, ignoring any $\calI = 0$ outcomes and removing any defectives identified, an adaptive group testing algorithm could recover the full defective set using $o\big(k \log\frac{p}{k}\big)$ tests on average.  However, this would violate the standard $\Omega\big(k \log\frac{p}{k}\big)$ lower bound \cite[Sec~1.4]{Ald19}, which holds even in the adaptive setting.

\section{All-or-Nothing Behavior for Bernoulli Designs} \label{sec:bernoulli}

To establish the an all-or-nothing threshold for weak recovery under Bernoulli designs, it is useful to first study the weak detection problem as a stepping stone.

%
%
\subsection{Weak Detection} \label{sec:distinguish}

Recall from Section \ref{sec:weak_pos} that the goal of weak detection is to distinguish $P(\bX,\bY)$ (defined according to the group testing model) from $Q(\bX,\bY) := P(\bX)P(\bY)$, with $P(\bX)$ being the i.i.d.~Bernoulli distribution using the choice of $\nu$ in \eqref{eq:choice_nu}, and $P(\bY) = 2^{-n}$.

For concreteness, we consider the Bayesian setting, where the observed pair $(\bX,\bY)$ is drawn from $P$ or $Q$ with probability $\frac{1}{2}$ each.  The resulting error probability of the hypothesis test is denoted by $\rvP_{\rme}$.  Trivially, choosing the hypothesis via a random guess gives $\rvP_{\rme} = \frac{1}{2}$.  It is a standard result in binary hypothesis testing that if $d_{\rm{TV}}(P,Q) \to 0$ as $p \to \infty$, then one cannot do better than random guessing asymptotically, i.e., it is impossible do better than $\rvP_{\rme} = \frac{1}{2} + o(1)$ (e.g., see \cite[Sec.~2.3.1]{DuchiNotes}).  

Let $D(P\|Q)$ denote the KL divergence between $P$ and $Q$.  By Pinsker's inequality, $D(P\|Q) \to 0$ implies $d_{\rm{TV}}(P,Q) \to 0$, and in addition, $D(P\|Q) \le \chi^2(P\|Q)$ \cite{Sas16}, where we define the $\chi^2$ divergence
\begin{align}
    &\chi^2(P \| Q) = \bE_Q\bigg[ \bigg( \frac{P(\bX,\bY)}{Q(\bX,\bY)} - 1 \bigg)^2 \bigg] \nonumber \\ &\quad = \bE_Q\bigg[ \bigg( \frac{P(\bX,\bY)}{Q(\bX,\bY)} \bigg)^2 \bigg] - 1 = \bE_P\bigg[ \frac{P(\bX,\bY)}{Q(\bX,\bY)} \bigg] - 1. \label{eq:chi_sq_forms}
\end{align}
Hence, to prove a hardness result for distinguishing $P$ from $Q$, it suffices to show that $\chi^2(P \| Q) \to 0$ as $p \to \infty$.  The following theorem gives conditions under which this is the case.

\begin{theorem}\label{verymain1} 
    {\em (Negative Result for Weak Detection)}
    Consider the probabilistic group testing problem with Bernoulli random testing using the choice of $\nu$ in \eqref{eq:choice_nu}, and suppose that $n \le (1-\eta)\log_2{p\choose k}$ for some $\eta \in (0,1)$. Then, when $k=o(p^{\frac{\eta}{1+\eta}})$ as $p \to \infty$, we have
    \begin{align}
        \label{tochi2}
        \chi^2(P\|Q)=\frac{1}{{p\choose k}}\sum_{l=0}^{k}{k \choose l}{p-k \choose l} 2^{n(1-\frac{l}{k})}-1\to 0
    \end{align} 
    as $p \to \infty$.  Hence, the smallest possible error probability for the binary hypothesis test between $P(\bX,\bY)$ and $Q(\bX,\bY) = P(\bX)(\bY)$ behaves as $\rvP_{\rme} = \frac{1}{2} + o(1)$.
\end{theorem}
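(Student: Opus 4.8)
The plan is to evaluate the $\chi^2$ divergence exactly using the rightmost expression in \eqref{eq:chi_sq_forms}, $\chi^2(P\|Q) = \bE_P\big[ P(\bX,\bY)/Q(\bX,\bY) \big] - 1$. Marginalizing over the (uniform) defective set, one has $P(\bX,\bY) = P(\bX)\,\frac{1}{\binom{p}{k}}\sum_{T}\mathbf{1}[\bY = \bY(\bX,T)]$, where $T$ ranges over the size-$k$ candidate sets and $\bY(\bX,T)$ is the outcome vector they would produce; combined with $Q(\bX,\bY) = P(\bX)\,2^{-n}$ this makes the likelihood ratio telescope, and under $P$ (true set $S$) it becomes $\frac{2^n}{\binom{p}{k}}\sum_{T}\mathbf{1}[\bY(\bX,S) = \bY(\bX,T)]$. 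Taking $\bE_P$ and swapping sum and expectation reduces everything to the pairwise collision probability $\bbP_{\bX}[\bY(\bX,S) = \bY(\bX,T)]$; by the symmetry of the \iid Bernoulli design this depends on $T$ only through the overlap $|S\cap T|$, and there are exactly $\binom{k}{l}\binom{p-k}{l}$ candidates with $|S\setminus T| = |T\setminus S| = l$, i.e.\ overlap $k-l$.

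The second step is the per-test collision probability. Splitting the items into the blocks $S\cap T$, $S\setminus T$ and $T\setminus S$ and conditioning on whether the common block $S\cap T$ makes the test positive, a direct computation gives a single-test collision probability of $1 - 2(1-\nu/k)^k + 2(1-\nu/k)^{k+l}$, which collapses to exactly $(1-\nu/k)^l = 2^{-l/k}$ on substituting the defining identity \eqref{eq:choice_nu}. As the $n$ tests are independent, the $n$-test collision probability is $2^{-nl/k}$; assembling the overlap-class counts with these probabilities gives precisely the claimed identity \eqref{tochi2}.

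It remains to show the sum in \eqref{tochi2} tends to $1$. Writing $a_l := \frac{1}{\binom{p}{k}}\binom{k}{l}\binom{p-k}{l}2^{n(1-l/k)}$ for the $l$-th summand, note each $a_l$ with $l<k$ is nondecreasing in $n$ while $a_k$ is constant in $n$, so the whole sum (hence $\chi^2$) is nondecreasing in $n$; since $\chi^2\ge 0$ it therefore suffices to prove the bound at the extremal value $2^n = \binom{p}{k}^{1-\eta}$. The fully-disjoint term is $a_k = \binom{p-k}{k}/\binom{p}{k} \to 1$, using $k^2 = o(p)$ (which follows from $\eta<1$, whence $\eta/(1+\eta)<\tfrac12$, together with $k = o(p^{\eta/(1+\eta)})$), and this cancels the $-1$. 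For the rest I would bound $a_l/a_k$: the estimates $\binom{p-k}{l}/\binom{p-k}{k} \lesssim (k/p)^{k-l}$, $\binom{k}{l} \le k^{k-l}/(k-l)!$ and $2^{n(1-l/k)} \le \binom{p}{k}^{(1-\eta)(k-l)/k} \approx (p/k)^{(1-\eta)(k-l)}$ combine, upon setting $m = k-l$, into $a_{k-m}/a_k \lesssim \frac{1}{m!}\big( k^{1+\eta}/p^{\eta} \big)^m$. With $\beta := k^{1+\eta}/p^{\eta}$, the hypothesis $k = o(p^{\eta/(1+\eta)})$ is precisely $\beta \to 0$, whence $\sum_{m\ge 1} a_{k-m}/a_k \le e^{\beta} - 1 = O(\beta) = o(1)$; since $a_k\to 1$, the off-diagonal mass vanishes and the full sum tends to $1$, giving $\chi^2(P\|Q)\to 0$. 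The conclusion $\rvP_{\rme} = \tfrac12 + o(1)$ then follows from the $D\le\chi^2$, Pinsker, and total-variation chain recalled before the theorem.

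The main obstacle is this final step: turning the heuristic ratio estimates into rigorous bounds that hold uniformly in $l$, so that $a_{k-m}/a_k \le \beta^m/m!$ really holds with $\beta = k^{1+\eta}/p^{\eta}$ and the Poisson-type summability kicks in. This requires carefully tracking the $(1+o(1))$ factors in $\log_2\binom{p}{k} = \big(k\log_2\frac{p}{k}\big)(1+o(1))$ and in the falling-factorial ratios $\binom{p-k}{l}/\binom{p-k}{k}$ and $\binom{k}{l}$; the exponent $\eta/(1+\eta)$ in the sparsity hypothesis is exactly the value at which $\beta$ transitions from $o(1)$ to $\omega(1)$, so the argument leaves no slack and the bookkeeping must be done with care.
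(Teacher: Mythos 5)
Your proposal is correct and follows essentially the same route as the paper: the same overlap decomposition yielding the per-test collision probability $(1-\nu/k)^{\ell}=2^{-\ell/k}$ and hence the exact formula \eqref{tochi2}, followed by the same key ratio bound $\binom{p-k}{\ell}\big/\binom{p}{k}\lesssim\big(k/(p-k+1)\big)^{k-\ell}$, with everything reducing to $k^{1+\eta}/p^{\eta}\to 0$. The only cosmetic differences are that you compute $\chi^2$ via $\bE_P[P/Q]-1$ rather than $\bE_Q[(P/Q)^2]-1$ (identical by \eqref{eq:chi_sq_forms}), and that you sum the Poisson-type series $\sum_m (C\beta)^m/m!$ term by term where the paper applies the binomial identity $\sum_{\ell}\binom{k}{\ell}x^{\ell}=(1+x)^k$ to get the closed-form bound $\exp\big[e^{1-\eta}k(k/p)^{\eta}\tfrac{p}{p-k+1}\big]-1$ directly.
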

\begin{proof}
    The proof is based on substituting the distributions $P$ and $Q$ into \eqref{eq:chi_sq_forms} and performing asymptotic simplifications.  The details are given in Appendix \ref{app:pf_neg}.
\end{proof}

The condition $k=o(p^{\frac{\eta}{1+\eta}})$ holds when $k$ grows sufficiently slowly with respect to $p$, e.g., it holds for arbitrarily small $\eta$ when $k = {\rm poly}(\log p)$.  On the other hand, it remains open as to whether a similar hardness result can be proved when $k$ grows faster than $\Theta(p^{\frac{\eta}{1+\eta}})$.  To address this question, we note the following:
\begin{itemize}
    \item Theorem \ref{verymain3} above shows that $P$ and $Q$ can be reliably distinguished when $n \ge (1-\eta)\log_2{p\choose k}$ and $k = p^{\frac{1+\eta}{3+\eta}+\Omega(1)}$, which essentially reduces to $k \gg p^{1/3}$ for small $\eta$.  Thus, $\log_2{p\choose k}$ no longer serves as an all-or-nothing threshold in this denser regime.
    This provides an interesting point of contrast with the analogous sparse linear regression problem \cite{Ree19}, where the analogous hardness result to Theorem \ref{verymain1} holds for all $k = O( \sqrt{p} )$.  In addition, \cite[App.~C]{Ree19} provides a positive result showing that this threshold is tight.
    \item Theorem \ref{verymain2} below shows that the above $\chi^2$-divergence does not approach zero when $n \ge (1-\eta)\log_2{p\choose k}$ and $k=\Omega\big(p^{\frac{\eta}{1+\eta}}\big)$.  Note that $\chi^2$-divergence approaching zero is sufficient, but not necessary, for establishing the hardness of distinguishing $P$ from $Q$.  Hence, this result does not establish such hardness, but it does show that any proof establishing hardness must move beyond the approach of bounding $\chi^2(P\|Q)$.

    In the sparse linear regression problem, a similar limitation regarding the $\chi^2$ divergence is overcome by conditioning out certain ``catastrophic'' low-probability events \cite{Ree19} that blow up the divergence.  Unfortunately, it appears to be difficult to identify an analogous event in the group testing problem.
\end{itemize}
Formally, the second of these is stated as follows.

\begin{theorem} \label{verymain2} 
    {\em (A Condition for Non-Vanishing $\chi^2$-Divergence)}
    Consider the probabilistic group testing problem with Bernoulli random testing using the choice of $\nu$ in \eqref{eq:choice_nu}, and suppose that $n = (1-\eta)\log_2{p\choose k}$ for some $\eta \in (0,1)$. Then, when $k=\Omega(p^{\frac{\eta}{1+\eta}})$ as $p \to \infty$, we have
    \begin{align}
        \label{fact2tobe}
        \liminf_{p \to \infty} \chi^2(P\|Q) \geq c
    \end{align} 
    for some constant $c>0$.
\end{theorem}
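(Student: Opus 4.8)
The plan is to start from the closed form for $\chi^2(P\|Q)$ already extracted in \eqref{tochi2} and to expose its probabilistic meaning. Re-indexing the sum by $r=k-l$ and using $\binom{k}{k-r}=\binom kr$ together with $1-\frac{k-r}{k}=\frac rk$, the identity \eqref{tochi2} rewrites as
\[
    \chi^2(P\|Q)+1 = \frac{1}{\binom pk}\sum_{r=0}^{k}\binom kr\binom{p-k}{k-r}\,2^{nr/k} = \bbE\big[t^{R}\big],
\]
where $t:=2^{n/k}$ and $R:=|S\cap S'|$ is the overlap of two independent uniform $k$-subsets $S,S'$ of $\{1,\dots,p\}$. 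Indeed $\tfrac{1}{\binom pk}\binom kr\binom{p-k}{k-r}$ is precisely the Hypergeometric pmf $\bbP[R=r]$, so $\chi^2+1$ is the probability generating function of $R$ evaluated at $t$, and in particular $\bbE[R]=k^2/p$.

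The main step is a \emph{first-moment} (rather than full-sum) lower bound. Since $t\ge 1$ and $R\ge 0$, Bernoulli's inequality gives $t^{R}\ge 1+R(t-1)$ pointwise; taking expectations yields
\[
    \chi^2(P\|Q) \;\ge\; (t-1)\,\bbE[R] \;=\; \big(2^{n/k}-1\big)\frac{k^2}{p}.
\]
It is essential to retain the ``$1+{}$'' term here: the weaker Jensen bound $\bbE[t^R]\ge t^{\bbE[R]}$ is useless, because $\bbE[R]=k^2/p$ may tend to $0$, whereas the quantity that actually drives the divergence is that $t$ itself is polynomially large.

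It then remains to substitute $n=(1-\eta)\log_2\binom pk$, so that $t=\binom pk^{(1-\eta)/k}\ge (p/k)^{1-\eta}$ via $\binom pk\ge (p/k)^k$. This gives
\[
    \chi^2(P\|Q)\;\ge\;\Big((p/k)^{1-\eta}-1\Big)\frac{k^2}{p}=\frac{k^{1+\eta}}{p^{\eta}}\Big(1-(k/p)^{1-\eta}\Big).
\]
Under the standing assumption $k=o(p)$ (with $\eta<1$) the bracketed factor tends to $1$, while $k=\Omega\big(p^{\eta/(1+\eta)}\big)$ forces $k^{1+\eta}/p^{\eta}\ge c_0$ for some constant $c_0>0$; hence $\liminf_{p\to\infty}\chi^2(P\|Q)\ge c_0>0$, as claimed.

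The reason I expect this route to be easier than the obvious one is that directly estimating the whole sum in \eqref{tochi2} is delicate: the dominant summand sits near $l=k$ when $k^{1+\eta}/p^{\eta}=O(1)$, but migrates to an interior index (with a Poisson-type profile of mean $\approx k^{1+\eta}/p^{\eta}$) once that quantity diverges, so a direct asymptotic analysis would have to split into regimes according to the growth rate of $k$. The first-moment shortcut collapses all of these cases uniformly, and the only points requiring genuine care are the Vandermonde bookkeeping that turns \eqref{tochi2} into $\bbE[t^R]$ and the verification that the single-term bound $(t-1)\bbE[R]$ already exceeds the $O(1)$ threshold.
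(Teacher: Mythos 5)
Your proof is correct, and it takes a genuinely different route from the paper's. You recognize that $\chi^2(P\|Q)+1=\bbE[t^R]$ with $t=2^{n/k}$ and $R$ the hypergeometric overlap of two independent uniform $k$-subsets (a valid re-indexing, with Vandermonde guaranteeing the weights sum to one and $\bbE[R]=k^2/p$), and then apply the pointwise bound $t^R\ge 1+R(t-1)$, which is legitimate for $t\ge 1$ and integer $R\ge 0$. The paper instead proceeds in two cases: for $k=\Omega(\sqrt p)$ it argues indirectly that $\chi^2$ cannot vanish because Theorem~\ref{verymain3} shows weak detection is then possible, and for $k=o(\sqrt p)$ it truncates the sum \eqref{tochi2} to the two terms $l\in\{k-1,k\}$, which requires showing $\binom{p-k}{k}/\binom{p}{k}=1-O(k^2/p)$ so that the overlap-zero term nearly cancels the trailing $-1$. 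Your first-moment argument makes that cancellation exact (the constant $1$ in $1+R(t-1)$ integrates to $1$ against the full pmf), so you need neither the case split nor the appeal to the positive detection result, and the bound $\chi^2\ge(2^{n/k}-1)k^2/p\ge \frac{k^{1+\eta}}{p^\eta}\bigl(1-(k/p)^{1-\eta}\bigr)$ holds uniformly over all $k=o(p)$. Both arguments ultimately extract the same dominant quantity $k^{1+\eta}/p^{\eta}$ — your term $(t-1)\bbE[R]$ is essentially the paper's $l=k-1$ contribution — but your packaging is shorter, self-contained, and covers the regime $k=\Omega(\sqrt p)$ directly rather than by contradiction. The only housekeeping point is that $k=o(p)$ is a standing assumption of the paper rather than an explicit hypothesis of the theorem, which you correctly flag when disposing of the bracketed factor.
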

\begin{proof}
    See Appendix \ref{app:pf_neg}.
\end{proof}

%
%
\subsection{Weak Recovery} \label{sec:weak}

We are now ready to present our main negative result concerning the weak recovery criterion under Bernoulli testing, and establishing an all-or-nothing  threshold at $n \sim k\log_2\frac{p}{k}$.

\begin{theorem} \label{thm:weak}
    {\em (Negative Result for Weak Recovery)} Consider the probabilistic group testing problem with Bernoulli random testing using the choice of $\nu$ in \eqref{eq:choice_nu}, and suppose that $n \le (1-\eta)\log_2{p\choose k}$ for some $\eta \in (0,1)$. Then, when $k \to \infty$ with $k=o(p^{\frac{\eta}{1+\eta}})$ as $p \to \infty$, for any fixed $\alpha \in (0,1)$, any decoder that outputs some estimate $\widehat{S}$ of $S$ must yield the following as $p \to \infty$:
    \begin{equation}
        \bbP\Big[ \max\big\{ |\widehat{S} \setminus S|, |S \setminus \widehat{S}| \big\} \le \alpha k \Big] \to 0 \label{eq:no_weak}
    \end{equation}
\end{theorem}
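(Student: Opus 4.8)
The plan is to prove \eqref{eq:no_weak} by contradiction, using the weak-detection impossibility of Theorem \ref{verymain1} as the engine. Write $\delta := 1-\alpha \in (0,1)$ and let $E = \big\{\max\{|\widehat S\setminus S|,|S\setminus \widehat S|\}\le \alpha k\big\}$; on $E$ we have $|\widehat S\cap S|\ge \delta k$ and $|\widehat S|\le (1+\alpha)k$, so a ``successful'' decoder pins down a constant fraction of $S$ while staying of bounded size. Suppose \eqref{eq:no_weak} fails, so that for some (possibly randomized) decoder we have $\bbP[E]\ge \gamma$ for a constant $\gamma>0$ along a subsequence of $p$. I would then convert this decoder into a hypothesis test for $P$ versus $Q$ and contradict Theorem \ref{verymain1}. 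The conceptual point is that weak-detection impossibility cannot be applied directly to $E$ (which depends on the hidden $S$, not only on $(\bX,\bY)$); instead one needs the \emph{forward} reduction ``weak recovery $\Rightarrow$ weak detection,'' which is exactly why weak detection serves as the stepping stone.

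The detector I would use appends $n_2$ fresh i.i.d.\ Bernoulli validation tests to the $n$ tests seen by the decoder, for a total of $N=n+n_2$ tests. It runs the decoder on the first $n$ tests to obtain $\widehat S$, forms predicted validation outcomes $\widehat Y^{(i)} = \bigvee_{j\in\widehat S} X^{(i)}_j$, and compares them to the observed validation outcomes $Y^{(i)}$, declaring ``$P$'' iff the empirical disagreement rate over the $n_2$ validation tests falls sufficiently below $\tfrac12$. The crux is a clean per-test computation. Writing $a=|\widehat S\cap S|$, $b=|S\setminus\widehat S|$, $c=|\widehat S\setminus S|$ and $q = 1-\tfrac{\nu}{k}$, a disagreement on a validation test occurs iff exactly one of $Y^{(i)},\widehat Y^{(i)}$ equals $1$, and inclusion–exclusion gives disagreement probability $q^{a+b}+q^{a+c}-2q^{a+b+c}$. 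Since $a+b=|S|=k$, the choice \eqref{eq:choice_nu} forces $q^{a+b}=q^k=\tfrac12$, and the expression collapses to $\tfrac12 - q^{c}\,(1-q^{a})$. On $E$ we have $c\le \alpha k$ and $a\ge \delta k$, so this disagreement rate is at most $\tfrac12 - 2^{-\alpha}(1-2^{-\delta})$, a constant strictly below $\tfrac12$ depending only on $\delta$. Under $Q$, by contrast, the validation outcomes are uniform and independent of $(\bX,\widehat S)$, so the disagreements are i.i.d.\ $\mathrm{Bernoulli}(\tfrac12)$, with rate exactly $\tfrac12$.

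Given this constant separation, a Hoeffding bound shows that a \emph{constant} $n_2=n_2(\delta,\gamma)$ of validation tests suffices to make the threshold test declare ``$P$'' with probability $\ge 1-\varepsilon_0$ under $Q$-error and, conditioned on $E$, with probability $\ge 1-\varepsilon_0$ under $P$; combined with $\bbP[E]\ge\gamma$ this yields Bayes error $\rvP_{\rme}\le \tfrac12 - \tfrac{\gamma}{4}$ for small $\varepsilon_0$. It remains to check that this detector lives in the regime where Theorem \ref{verymain1} applies. Since $n_2=O(1)$, we have $N\le (1-\eta)\log_2\binom pk + O(1) \le (1-\tilde\eta)\log_2\binom pk$ with $\tilde\eta = \eta - O\!\big(1/\log_2\binom pk\big)$, so $\eta-\tilde\eta = O\big(1/(k\log\tfrac pk)\big)$. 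Because $k\to\infty$ and $k=o(p^{\eta/(1+\eta)})$ forces $\log\tfrac pk=\Theta(\log p)$, one checks $p^{\,\eta/(1+\eta)-\tilde\eta/(1+\tilde\eta)}\to 1$, and hence the sparsity hypothesis is preserved: $k=o(p^{\tilde\eta/(1+\tilde\eta)})$. Theorem \ref{verymain1} then gives $\rvP_{\rme}=\tfrac12+o(1)$ at $N$ tests, contradicting $\rvP_{\rme}\le\tfrac12-\tfrac{\gamma}{4}$ and establishing \eqref{eq:no_weak}.

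The step I expect to be most delicate is precisely this last bookkeeping: one must verify that appending the (negligibly many) validation tests does not push the problem out of the joint regime ``$N\le(1-\tilde\eta)\log_2\binom pk$ and $k=o(p^{\tilde\eta/(1+\tilde\eta)})$'' in which detection is impossible. The identity $a+b=k$, which makes the disagreement gap $q^{c}(1-q^{a})$ uniformly positive over \emph{all} admissible $\widehat S$ and all $\alpha\in(0,1)$, is what lets the reduction cover the important regime $\alpha\to 1$ (tiny overlap $\delta k$), where cruder posterior-geometry or second-moment arguments become circular.
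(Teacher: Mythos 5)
Your proof is correct, but it takes a genuinely different route from the paper's. The paper appends a \emph{single} extra test $(X',Y')$ and runs an information-theoretic argument: expanding $I(S;\bY,Y'\mid\bX,X')$ in two ways yields $H(Y'\mid\bX,X',\bY)\ge\log 2-D(P\|Q)$, and since Theorem \ref{verymain1} gives $D(P\|Q)\to 0$, the conditional law of $Y'$ is within $o(1)$ of uniform with probability $1-o(1)$; a weak-recovery decoder would then, via exactly the per-test computation you perform (predict $\widehat{Y}'=\bigvee_{j\in\widehat{S}}X'_j$, which agrees with $Y'$ with probability $(1/2)^{1-\delta}=\tfrac12+\Omega(1)$), contradict this unpredictability. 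You instead run the operational reduction ``weak recovery $\Rightarrow$ weak detection'': append $O(1)$ i.i.d.\ validation tests, threshold the empirical disagreement rate, and contradict the Bayes-error conclusion of Theorem \ref{verymain1} directly. Both arguments hinge on the same inclusion--exclusion identity (disagreement probability $\tfrac12-q^{c}(1-q^{a})$ with $q^{k}=\tfrac12$), and both must do the same bookkeeping to confirm that the extra test(s) perturb $\eta$ by only $O(1/\log_2\binom pk)$, which preserves $k=o(p^{\eta/(1+\eta)})$ because $k\to\infty$ and $\log\tfrac pk=\Theta(\log p)$ --- the paper dismisses this in a one-line parenthetical for $n\mapsto n+1$, whereas you verify it explicitly, correctly identifying it as the delicate step. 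What the paper's entropy route buys is that it needs only one extra test and no concentration inequality, and it rules out \emph{any} predictor of $Y'$ at once; what your route buys is that it invokes only the weaker total-variation/Bayes-error consequence of Theorem \ref{verymain1} rather than $D(P\|Q)\to0$ itself, and it makes the quantitative dependence on $\gamma$ and $\alpha$ explicit. One small wording slip: ``declare `$P$' with probability $\ge 1-\varepsilon_0$ under $Q$-error'' should read ``declare `$P$' with probability $\le\varepsilon_0$ under $Q$''; your subsequent Bayes-error computation uses the correct version, so nothing breaks.
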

\begin{proof}
    The proof is outlined as follows.  Following an idea used for sparse linear regression in \cite{Ree19}, we study the mutual information $I(S; \bY,Y' | \bX,X')$, where $(X',Y')$ is an additional test independent from $(\bX,\bY)$.  Combining some manipulations of information terms with the weak detection result of Theorem \ref{verymain1}, we show that $H(Y'|\bX,X',\bY) \to \log 2$ when $n \le (1-\eta)\log_2{p\choose k}$.  On the other hand, we show that if weak recovery were possible, we would be able to predict $Y'$ given $(\bX,X',\bY)$ better than random guessing, meaning that $H(Y'|\bX,X',\bY)$ would be bounded away from $\log 2$.  Combining these observations, we deduce that weak recovery must be impossible.  The details are given in Appendix \ref{app:pf_neg}.
\end{proof}


Hence, when $k$ is sufficiently sparse so that $k=o(p^{\frac{\eta}{1+\eta}})$ holds for any $\eta > 0$ (e.g., $k = {\rm poly}(\log p)$), the threshold $n^* = k\log_2\frac{p}{k}$ serves as an exact threshold between {\em complete success} and {\em complete failure}.  We note that in previous works, phase transitions were proved in \cite{Sca15b,Coj19,Coj19a} regarding the error probability of {\em recovering the exact defective set}, whereas Theorem \ref{thm:weak} gives the much stronger statement that one cannot even identify a small fraction of the defective set.

%
%
\subsection{Identifying a Definite Defective} \label{sec:identify_dd}

The following negative result for identifying a definite defective (see Section \ref{sec:pos_dd}) follows in a straightforward manner from our negative result for weak detection (Theorem \ref{verymain1}).


\begin{theorem} \label{thm:identify_dd}
    {\em (Negative Result for Identifying a Definite Defective)}
    Consider the probabilistic group testing problem with Bernoulli random testing using the choice of $\nu$ in \eqref{eq:choice_nu}, and suppose that $n \le (1-\eta)\log_2{p\choose k}$ for some $\eta \in (0,1)$. Then, when $k=o(p^{\frac{\eta}{1+\eta}})$ as $p \to \infty$, for any decoder that outputs an estimate $\calI \in \{0,1,\dotsc,p\}$ of a definite defective (with 0 representing ``I don't know''), we have
    \begin{equation}
        \bbP[\calI > 0 \cap \calI \notin S] = 0 \implies \bbP[\calI = 0] \to 1
    \end{equation}
    as $p \to \infty$.
\end{theorem}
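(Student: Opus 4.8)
The plan is to argue by contradiction, converting a hypothetical definite-defective decoder into a weak-detection scheme that beats random guessing by a constant margin, which would contradict Theorem~\ref{verymain1}. Concretely, suppose there is a decoder (deterministic or randomized) obeying the zero-false-positive constraint $\bbP[\calI > 0 \cap \calI \notin S] = 0$ but failing the conclusion, so that $\bbP[\calI=0]\not\to 1$; then along a subsequence $\bbP_P[\calI > 0] \ge c_0$ for some constant $c_0 > 0$. Recall from Theorem~\ref{verymain1} that $\chi^2(P\|Q)\to 0$, hence (via Pinsker) $d_{\mathrm{TV}}(P,Q)\to 0$, so the declared-$P$ probability of \emph{any} test differs by at most $o(1)$ between $P$ and $Q$. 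The goal is to exhibit a test whose advantage stays bounded away from zero.

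The first thing to note is that the naive detector ``declare $P$ iff $\calI > 0$'' is useless: the event $\{\calI>0\}$ is a function of $(\bX,\bY)$ alone, so $|\bbP_P[\calI>0]-\bbP_Q[\calI>0]|\le d_{\mathrm{TV}}(P,Q)\to 0$, meaning the same spurious certifications that occur under the group-testing model occur just as often under the null model. I must therefore exploit the \emph{semantic} content of the guarantee --- that under $P$ the index $\calI$ is genuinely defective --- rather than merely the declaration event. I would do this with a coverage check on held-out data: append $m=\lceil Ck\rceil$ fresh i.i.d.\ Bernoulli$(\nu/k)$ tests $(X'^{(i)},Y'^{(i)})_{i=1}^m$ to the original $n$, run the decoder on the original $n$ tests to obtain $\calI$, and declare $P$ precisely when $\calI>0$ and every fresh test containing item $\calI$ is positive (otherwise declare $Q$).

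Under $P$, whenever $\calI>0$ we have $\calI\in S$, so every test containing $\calI$ is automatically positive; hence the detector declares $P$ with probability exactly $\bbP_P[\calI>0]\ge c_0$. Under $Q$, the fresh design and the length-$(n+m)$ outcome vector are independent, so, conditioned on $\calI=j$, the probability that all fresh tests containing $j$ are positive equals $\bigl(1-\tfrac{\nu}{2k}\bigr)^m$, independent of $j$; thus the detector declares $P$ with probability $\bbP_Q[\calI>0]\bigl(1-\tfrac{\nu}{2k}\bigr)^m\le\bigl(1-\tfrac{\nu}{2k}\bigr)^{Ck}=e^{-C\nu/2}(1+o(1))$. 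Choosing the constant $C$ large enough that $e^{-C\nu/2}\le c_0/2$ makes the advantage $\bbP_P[\text{declare }P]-\bbP_Q[\text{declare }P]$ at least $c_0/2-o(1)$. Since the advantage of any test lower-bounds $d_{\mathrm{TV}}(P,Q)$, this forces $d_{\mathrm{TV}}(P,Q)\ge c_0/2-o(1)$, contradicting Theorem~\ref{verymain1} and completing the proof.

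The step I expect to be the \textbf{main obstacle} is ensuring that Theorem~\ref{verymain1} still applies after appending $m=\Theta(k)$ tests, so that $d_{\mathrm{TV}}(P,Q)\to 0$ holds for the combined collection of $N=n+m$ tests. Since $\log_2\binom pk=(k\log_2\frac pk)(1+o(1))$ and $\frac pk\to\infty$, the extra tests contribute only $Ck=o(\log_2\binom pk)$ to the exponent, giving $N=(1-\eta+o(1))\log_2\binom pk$; substituting $N$ for $n$ in the $\chi^2$ expression \eqref{tochi2} and repeating the estimates behind Theorem~\ref{verymain1} shows $\chi^2(P\|Q)\to 0$ is preserved, with the correction to the critical sparsity threshold $p^{\eta/(1+\eta)}$ amounting only to a constant factor, so that the hypothesis $k=o(p^{\eta/(1+\eta)})$ still suffices. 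The remaining care is bookkeeping: feeding the decoder exactly its original $n$ tests and using genuine Bernoulli$(\nu/k)$ fresh tests, so that the overall design matches the setup of Theorem~\ref{verymain1}.
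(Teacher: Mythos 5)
Your proof is correct, but it takes a genuinely different route from the paper's. The paper argues directly: the zero-false-positive constraint forces $\calI=0$ on any realization $(\bX,\bY)$ for which some set $S'$ of size $k$ \emph{disjoint} from $S$ is also consistent with the outcomes (since then $\bbP[\calI\notin S\mid \bX,\bY]>0$ for any $\calI>0$), and the probability of such an $S'$ existing is lower-bounded by $\tfrac{1}{1+\chi^2(P\|Q)}$ via de Caen's inequality, reusing exactly the pairwise consistency probabilities $2^{-n(1+\ell/k)}$ already computed in the $\chi^2$ calculation. That argument needs no auxiliary tests and even yields the quantitative bound $\bbP[\calI=0]\ge \tfrac{1}{1+\chi^2(P\|Q)}$. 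Your argument instead reduces to weak detection by appending $m=\Theta(k)$ fresh tests and converting the semantic guarantee ``$\calI\in S$'' into an observable coverage statistic; you correctly recognize that the naive detector ``declare $P$ iff $\calI>0$'' is useless, and your computations under both $P$ and $Q$ are right. You also correctly identify and resolve the one delicate point: Theorem~\ref{verymain1} must be re-derived for $N=n+Ck$ tests, and this cannot be done by naively invoking the theorem with a smaller $\eta'$ (which would demand the strictly stronger condition $k=o(p^{\eta'/(1+\eta')})$); rather, one substitutes $N$ into \eqref{tochi2}, where the extra $Ck(1-\ell/k)=C(k-\ell)$ in the exponent contributes only a factor $2^{C(k-\ell)}$ that is absorbed into the geometric term of Lemma~\ref{lem_simple}, changing the bound to $\exp\bigl[2^{C}e^{1-\eta}k(k/p)^{\eta}\tfrac{p}{p-k+1}\bigr]-1$, which still vanishes under $k=o(p^{\eta/(1+\eta)})$. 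The trade-off: your reduction is a generic ``certifiable output implies detectability'' argument that would transfer to other zero-undetected-error tasks, at the cost of the extra-test bookkeeping and a slightly lossier conclusion; the paper's de Caen argument is shorter, self-contained, and quantitatively sharper.
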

\begin{proof}
    The idea is to note that since $\bbP[\calI > 0 \cap \calI \notin S] = 0$, the decoder must output $\calI = 0$ whenever there exists some $S'$ of cardinality $k$ that is {\em disjoint from $S$} and consistent with the test outcomes.  Using de Caen's bound \cite{Dec97}, we show that this holds with probability at least $\frac{1}{1+\chi^2(P\|Q)}$, and the theorem follows since $\chi^2(P\|Q) \to 0$ due to Theorem \ref{verymain1}.  The details are given in Appendix \ref{app:pf_neg}.
\end{proof}

\subsection{Discussion: Bernoulli vs.~General Designs} \label{sec:discussion}

We wrap up this section and Section \ref{sec:positive} by highlighting that for all three of the recovery criteria considered, Bernoulli designs can be highly suboptimal compared to general designs:
\begin{itemize}
    \item Weak recovery is possible (for general test designs) when $n = \Omega\big(k \log \frac{n}{k}\big)$ with an arbitrarily small implied constant, whereas Bernoulli testing requires $n = (1+o(1))k \log \frac{n}{k}$ in sufficiently sparse regimes;
    \item A trivial strategy exists for weak detection with only $O(1)$ tests (for constant error probability), whereas Bernoulli testing requires $n = (1+o(1))k \log \frac{n}{k}$ in sufficiently sparse regimes;
    \item Identifying a definite defective is possible with $O(\log n)$ tests, whereas Bernoulli testing requires $n = (1+o(1))k \log \frac{n}{k}$ in sufficiently sparse regimes.
\end{itemize}
Since Bernoulli designs are a prototypical example of an unstructured random design, these results indicate that despite their strong theoretical guarantees (e.g., as established in \cite{Sca15b}), significant care should be taken in adopting them when too few tests are available, or when relaxed recovery criteria are considered to be acceptable.  Essentially, in such cases, it is much better to ``cut one's losses'' early on via a hand-crafted design that targets the less stringent recovery criterion under consideration.

%
%
\section{Conclusion} \label{sec:conclusion}

We have established the fundamental limits of noiseless non-adaptive group testing under the weak detection criterion, and more generally approximate recovery, and shown Bernoulli designs to be highly suboptimal (in sufficiently sparse regimes) in the sense of exhibiting an all-or-nothing threshold at $n \sim k\log_2\frac{p}{k}$.  In addition, we gave similar negative results for Bernoulli testing under the criteria of weak detection and identifying a definite defective, while establishing that non-Bernoulli designs can attain these goals with very few tests.


Possible directions for future work include (i) closing the remaining gaps in between the positive and negative results in the regime $k = \Theta(p^{\theta})$ with $\theta \in \big(0,\frac{1}{3}\big)$; (ii) handling Bernoulli$\big(\frac{\nu}{k}\big)$ designs with more general choices of $\nu$; (iii) providing analogous results for the near-constant tests-per-item design \cite{Joh16,Coj19}; (iv) finding a design that attains exact or near-exact recovery with $n = (1+o(1))\big(k \log_2\frac{p}{k}\big)$ tests while degrading gracefully below this threshold; and (v) developing analogous results under random noise models \cite{Mal80,Cha11,Sca17b}.

\appendices

\section{Proof of Theorem \ref{verymain3} (Positive Result for Weak Detection with Bernoulli Designs)}\label{subB}


    By the assumption $k = o(p)$, we have $\log_2{p\choose k} = \big(k \log_2\frac{p}{k}\big)(1+o(1))$.  Hence, it suffices to prove the theorem for $n = (1-\eta)k \log_2\frac{p}{k}$, since we can incorporate the $1+o(1)$ term into $\delta$. 
    
    In the following, we use the terminology that the $j$-th column $\bX_j$ of $\bX$ is {\em covered} by $\bY$ if the support of $\bX_j$ is a subset of the support of $\bY$ (i.e., whenever the $i$-th entry of $\bX_j$ is $1$, the outcome $y_i$ is also $1$).  We consider distinguishing models $P$ and $Q$ by counting the number of columns of $\bX$ that are covered by $\bY$.
    
    Fix a constant $\zeta \in (0,1)$, and consider the ``typical'' set $\calT$ containing all the sequences $\by \in \{0,1\}^n$ such that the number of positive tests is between $\frac{n(1-\zeta)}{2}$ and $\frac{(1+\zeta )n}{2}$.  By the law of large numbers, we have $\bbP[\bY \notin \calT] \to 0$ as $p \to \infty$. 
    Given any sequence $\by \in \calT$, when an independent random column $\bX_j$ is generated, the (conditional) probability $q_0$ of it being covered satisfies
    \begin{align}
    \label{eq32}
        \bigg( 1 - \frac{\nu}{k} \bigg)^{\frac{(1+\zeta) n}{2}} \leq q_0 \leq   \bigg( 1 - \frac{\nu}{k} \bigg)^{\frac{(1-\zeta) n}{2}}. 
    \end{align}
    For $n = (1-\eta)k \log_2\frac{p}{k}$, recalling the choice of $\nu$ in \eqref{eq:choice_nu} (which  yields $P(\bY) = 2^{-n}$), we have
    \begin{align}
    	     \bigg( 1 - \frac{\nu}{k} \bigg)^{\frac{(1-\zeta) n}{2}} &= \bigg( \frac{1}{2} \bigg)^{\frac{(1-\zeta) n}{2k}} = \bigg(\frac{k}{p}\bigg)^{\frac{(1-\zeta)(1-\eta)}{2}},
    \end{align}
    and similarly
    \begin{align}
         \bigg( 1 - \frac{\nu}{k} \bigg)^{\frac{(1+\zeta) n}{2}}= \bigg(\frac{k}{p}\bigg)^{\frac{(1+\zeta)(1-\eta)}{2}}.
    \end{align}
    Hence, we have
    \begin{align}
        \label{boundp0}
        \bigg(\frac{k}{p}\bigg)^{\frac{(1+\zeta)(1-\eta)}{2}} \leq q_0 \leq \bigg(\frac{k}{p}\bigg)^{\frac{(1-\zeta)(1-\eta)}{2}}.
    \end{align}
    Then, the distribution of the number $\tilN(\bX,\by)$ of covered columns under the two hypotheses is given as follows:
    \begin{itemize}
        \item Under $P$: $\tilN(\bX,\by) \sim \big(k + \mbox{Binomial}(p - k, q_0)\big)$, where the addition of $k$ is due to the fact that the defective items' columns are almost surely covered due to the definition of the group testing model.
        \item Under $Q$: $\tilN(\bX,\by) \sim \mbox{Binomial}(p, q_0)$.
    \end{itemize}
    We consider the following procedure for distinguishing these two hypotheses:
    \begin{align}
        \tilN(\bX,\bY)\quad \mathop{\gtreqless}_{Q}^{P}\quad pq_0+\frac{k}{2}.
    \end{align}
    Then, given $\by$, the error probability $\rvP_{\rme}(\by)$ with a uniform prior satisfies 
    \begin{align}
        \rvP_{\rme}(\by) &= \frac{1}{2} \bbP_Q\bigg[\tilN(\bX,\by)> pq_0+\frac{k}{2}\bigg] \nonumber \\ &\quad + \frac{1}{2} \bbP_P\bigg[\tilN(\bX,\by)< pq_0+\frac{k}{2}\bigg]. \label{eq:hyp_pe}
    \end{align}
    For the first term in \eqref{eq:hyp_pe}, observe that by the Berry-Esseen Theorem~\cite{feller} (see Corollary~\ref{momlem} in Appendix \ref{sec:berry_esseen}), we have
    \begin{align} 
        &\bbP_Q\bigg[\tilN(\bX,\by)\geq  pq_0+\frac{k}{2}\bigg] \nonumber \\
        &~~ =\bbP_Q\bigg[\frac{\tilN(\bX,\bY) - p q_0}{\sqrt{p (1-q_0)q_0}} \geq  \frac{k}{2\sqrt{p (1-q_0)q_0}}\bigg]\\
        \label{eq45new}
        &~~ \leq \calQ\bigg( \frac{k}{2\sqrt{p (1-q_0)q_0}}\bigg) +\frac{6\rho}{\sigma^3 \sqrt{p}},
    \end{align}
    where $\calQ(t) = \frac{1}{\sqrt{2\pi}}\int_{t}^{\infty}e^{-u^2/2}\,{\rm d}u$ denotes the standard Gaussian upper tail probability function, and as also shown in Appendix \ref{sec:berry_esseen}, the relevant moments are
    \begin{align}
        \rho &= (1-q_0)^3 q_0 + q_0^3 (1-q_0),\\
        \sigma &=\sqrt{(1-q_0)q_0}.
    \end{align}
    The ratio appearing in \eqref{eq45new} can be simplified as follows:
    \begin{align}
        \frac{6\rho}{\sigma^3 \sqrt{p}}&=\frac{6(1-q_0)^3 q_0 + 6q_0^3 (1-q_0)}{(\sqrt{(1-q_0)q_0})^3 \sqrt{p}}\\
        &= \frac{6(1-q_0)^2 q_0 + 6q_0^3 }{q_0^{3/2}\sqrt{(1-q_0)p} }\\
        &\leq \frac{12 q_0}{q_0^{3/2}\sqrt{(1-q_0)p} }\\
        \label{eq49new}
        &=\frac{12}{\sqrt{p q_0(1-q_0)}}.
    \end{align}
    Similarly, again using the Berry-Essen theorem, and writing $\tilN$ in place of $\tilN(\bX,\bY)$ for brevity, we have
    \begin{align} 
        &\bbP_P\bigg[\tilN \leq  pq_0+\frac{k}{2}\bigg] \nonumber  \\
        &=\bbP_P\bigg[\tilN -k - (p-k) q_0 \leq pq_0+ \frac{k}{2}-k - (p-k) q_0 \bigg]\\
        &=\bbP_P\bigg[\frac{\tilN -k - (p-k) q_0}{\sqrt{(p-k) (1-q_0)q_0}} \leq \frac{q_0 k- \frac{k}{2}}{\sqrt{(p-k) (1-q_0)q_0}} \bigg]\\
        \label{eq50new}
        &\leq 1-\calQ\bigg( \frac{q_0 k-\frac{k}{2}}{\sqrt{(p-k) (1-q_0)q_0}}\bigg) +\frac{6\rho}{\sigma^3 \sqrt{p-k}},
    \end{align}
    and since $k = o(p)$, we have from \eqref{eq49new} that $\frac{6\rho}{\sigma^3 \sqrt{p-k}} \le \frac{12}{\sqrt{p q_0(1-q_0)}} (1+o(1))$.
    
    We know from \eqref{boundp0} that $q_0 \to 0$, and combining this with $k = o(p)$, we see from \eqref{eq49new} and~\eqref{eq50new} that $\rvP_{\rme}\to 0$ as long as $p q_0\to \infty$ and $k=\omega(\sqrt{pq_0})$. The condition $p q_0\to \infty$ follows as an immediate consequence of \eqref{boundp0} (with $k = o(p)$ and $\delta < 1$).  In addition, again using \eqref{boundp0}, we find that 
    the condition $k=\omega\big(\sqrt{p q_0}\big)$ is satisfied if
    \begin{align}
        k&=\omega\bigg(\sqrt{p} \bigg(\frac{k}{p}\bigg)^{\frac{(1-\eta)(1-\zeta)}{4}}\bigg). 
    \end{align}
    Letting $a = (1-\eta)(1-\zeta)$, we find that this condition holds if $k^{1 - \frac{a}{4}} = \omega\big( p^{\frac{1}{2} - \frac{a}{4}} \big)$, which simplifies to $k = \omega\big( p^{\frac{2-a}{4-a}} \big)$.  Substituting $a = (1-\eta)(1-\zeta)$, and recalling that $\eta$ is constant and $\zeta$ is arbitrarily small, we find that the preceding condition holds as long as
    \begin{align}
        \label{scar1}
  		k&=\Omega\big(p^{\frac{1+\eta}{3+\eta} +\delta}\big)
    \end{align} 
    for arbitrarily small $\delta \in (0,1)$.  This completes the proof of Theorem \ref{verymain3}.

%
%

\section{Proofs of Negative Results for Bernoulli Designs} \label{app:pf_neg}

\subsection{Preliminary Calculations} \label{sec:prelim}


Since the group testing model $P$ and the null model $Q$ have the same $\bX$ distribution, and the null model assigns probability $2^{-n}$ to each $\bY$ sequence (due to the choice of $\nu$ in \eqref{eq:choice_nu}), we have
\begin{equation}
\label{eq3}
    \frac{P(\bX,\bY)}{Q(\bX,\bY)} = 2^n P(\bY|\bX) = 2^n \sum_{S} \frac{1}{{p \choose k}} P(\bY|\bX,S), 
\end{equation}
where here and subsequently, the summation over $S$ is implicitly over all $p \choose k$ subsets of $\{1,\dotsc,n\}$ of cardinality $k$.  Since the observation model defining $P$ is deterministic, $P(\bY|\bX,S)$ is simply $1$ if $\bY$ is consistent with $(\bX,S)$ (according to $Y^{(i)} = \bigvee_{j \in S} X_j^{(i)}$), and $0$ otherwise.  Letting $\calI_{S}(\bX,\bY)$ be the corresponding indicator function, we rewrite \eqref{eq3} as
\begin{equation}
    \label{eq4}
    \frac{P(\bX,\bY)}{Q(\bX,\bY)} = \frac{2^n}{{p \choose k}} \sum_{S} \calI_{S}(\bX,\bY),
\end{equation}
and take the square to obtain
\begin{equation}
    \label{senti2}
    \bigg(\frac{P(\bX,\bY)}{Q(\bX,\bY)}\bigg)^2 = \frac{4^n}{{p \choose k}^2} \sum_{S,S'} \calI_{S}(\bX,\bY) \calI_{S'}(\bX,\bY).
\end{equation}
Taking the average over $(\bX,\bY) \sim Q$ and using the middle form in \eqref{eq:chi_sq_forms}, we obtain
\begin{equation}
    \label{eqbusy}
    \chi^2(P\|Q) = \frac{4^n}{{p \choose k}^2} \sum_{S,S'} \bbE_Q\big[ \calI_{S}(\bX,\bY) \calI_{S'}(\bX,\bY) \big] - 1.
\end{equation}
The average here is the probability that a randomly generated $(\bX,\bY)$ (independent of each other) is consistent with both $S$ and $S'$.  By the symmetry of $(\bX,\bY)$ with respect to re-labeling items, we can assume without loss of generality that $S$ equals the set
\begin{align}
    \label{defsetS0}
    S_0 = \{1,2,\dotsc,k\},
\end{align}
and average over $S'$ alone; by splitting into $S'$ with $\ell$ entries in $\{k+1,\dotsc,p\}$ (non-overlapping with $S_0$) and $k-\ell$ entries in $\{1,\dotsc,k\}$ (overlapping with $S_0$), we obtain
\begin{align}
    &\chi^2(P\|Q) \nonumber \\ &= \frac{4^n}{{p \choose k}} \sum_{\ell=0}^k {k \choose \ell}{p-k \choose \ell} \bbE_Q\big[ \calI_{S_0}(\bX,\bY) \calI_{S'}(\bX,\bY) \big]  - 1, \label{eq:chi_sq}
\end{align}
where $S'$ implicitly satisfies $|S_0 \cap S'| = k-\ell$ in the expectation.

Now observe that $\bbE_Q\big[ \calI_{S_0}(\bX,\bY) \calI_{S'}(\bX,\bY) \big]$ is the probability (with respect to $Q$) that every one of the $n$ tests satisfies {\em any one} of the following:
\begin{itemize}
    \item The test outcome is negative, and all $k + \ell$ items from $S_0 \cup S'$ are excluded;
    \item The test outcome is positive, and at least one item from $S_0 \cap S'$ is included;
    \item The test outcome is positive, and no items from $S_0 \cap S'$ are included, but at least one item from each of $S_0 \setminus S'$ and $S' \setminus S_0$ are included.
\end{itemize}
For a single test, we characterize the probabilities of these three events under $Q$ as follows follows (recalling \eqref{eq:choice_nu}):
\begin{itemize}
    \item The first event has probability $\frac{1}{2} \big( 1 - \frac{\nu}{k} \big)^{k + \ell} = \frac{1}{2} \cdot \big(\frac{1}{2}\big)^{\frac{k+\ell}{k}} = \big(\frac{1}{2}\big)^{2+\frac{\ell}{k}}$;
    \item The union of the second and third events above can be reformulated as the event the test outcome is positive and {\em none} of the following events occur: (i) All items from $S_0 \cup S'$ are excluded; (ii) All items from $S_0$ are excluded, but at least one from $S' \setminus S_0$ is included; (iii) All items from $S'$ are excluded, but at least one from $S_0 \setminus S'$ is included.  
    Using this formulation, the union of the second and third events above has probability $\frac{1}{2}\big[ 1 - \big(1 - \frac{\nu}{k}\big)^{k + \ell} - 2 \cdot \frac{1}{2} \cdot  \big(1 - \big(1 - \frac{\nu}{k}\big)^{\ell}\big) \big] = \frac{1}{2}\big[ 1 - \big(\frac{1}{2}\big)^{1 + \frac{\ell}{k}} - \big(1 - \big(\frac{1}{2}\big)^{\frac{\ell}{k}}\big) \big] = \big(\frac{1}{2}\big)^{1+\frac{\ell}{k}} - \big(\frac{1}{2}\big)^{2+\frac{\ell}{k}} = \big(\frac{1}{2}\big)^{2+\frac{\ell}{k}}. $
\end{itemize}
Summing these two probabilities together gives an overall probability of $2 \cdot \big(\frac{1}{2}\big)^{2+\frac{\ell}{k}} = \big(\frac{1}{2}\big)^{1+\frac{\ell}{k}}$ associated with a single test. 
Since the tests are independent, taking the intersection of the corresponding $n$ events gives
\begin{align}
    \label{keykey}
    \bbE_Q\big[ \calI_{S_0}(\bX,\bY) \calI_{S'}(\bX,\bY) \big]=2^{-n(1+\frac{\ell}{k})},
\end{align}
and substitution into \eqref{eq:chi_sq} yields
\begin{equation}
    \chi^2(P\|Q) = \frac{4^n}{{p \choose k}} \sum_{\ell=0}^k {k \choose \ell}{p-k \choose \ell} 2^{-n(1+\frac{\ell}{k})}  - 1. \label{eq:chi_sq_2}
\end{equation}

\subsection{Proof of Theorem \ref{verymain1} (Impossibility of Weak Detection for $n \le (1-\eta)\log_2{p\choose k}$)} \label{subA}

Recall the setup of weak detection described in Section \ref{sec:weak_trivial}, and that we consider i.i.d.~Bernoulli designs with $\nu$ chosen via \eqref{eq:choice_nu}.  We first prove the following lemma, which provides an upper bound on the $\chi^2$-divergence.

\begin{lemma} \label{lem_simple} 
    Assume that $n \leq (1-\eta)\log_2{p\choose k}$ for some $\eta \in (0,1)$. Then, 
    it holds that
    \begin{align}
    \label{fact1to}
    \chi^2(P\|Q) 
    \leq \exp\bigg[e^{1-\eta}k \bigg(\frac{k}{p}\bigg)^{\eta} \frac{p}{p-k+1}\bigg]-1.
    \end{align}
\end{lemma}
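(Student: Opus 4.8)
The plan is to start from the exact expression for the $\chi^2$-divergence derived in the preliminary calculations, namely \eqref{eq:chi_sq_2} (equivalently the form displayed in \eqref{tochi2}),
\[
    \chi^2(P\|Q) + 1 = \frac{1}{\binom{p}{k}}\sum_{\ell=0}^{k}\binom{k}{\ell}\binom{p-k}{\ell}\, 2^{n(1-\frac{\ell}{k})},
\]
and to bound the right-hand side by a single exponential. The first step is a change of index that exposes the correct combinatorial structure: writing $j = k-\ell$ (so that $j$ is the overlap $|S_0 \cap S'|$, with $\binom{k}{\ell}=\binom{k}{j}$ and $2^{n(1-\ell/k)}=2^{nj/k}$), the sum becomes $\sum_{j=0}^{k} h_j\, 2^{nj/k}$, where $h_j = \frac{\binom{k}{j}\binom{p-k}{k-j}}{\binom{p}{k}}$ is exactly the hypergeometric probability that a uniformly random $k$-subset $S'$ meets $S_0$ in $j$ elements. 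This reindexing is the crucial move: in the original index the dominant mass sits near $\ell=k$, where the naive estimate $\binom{k}{\ell}\le k^\ell/\ell!$ is hopelessly loose, whereas after reindexing the dominant contributions are the small-$j$ terms, which are easy to control.

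Next I would bound $h_j$ by a single-factorial geometric term. Using $\binom{k}{j}\le \frac{k^{j}}{j!}$, the monotonicity $\binom{p-k}{k-j}\le \binom{p}{k-j}$, and the elementary identity
\[
    \frac{\binom{p}{k-j}}{\binom{p}{k}} = \frac{k(k-1)\cdots(k-j+1)}{(p-k+1)(p-k+2)\cdots(p-k+j)} \le \Big(\frac{k}{p-k+1}\Big)^{j},
\]
I obtain $h_j \le \frac{1}{j!}\big(\frac{k^2}{p-k+1}\big)^{j}$. Summing the resulting bound on $h_j\, 2^{nj/k}$ over all $j\ge 0$ and recognizing the exponential series then yields
\[
    \chi^2(P\|Q)+1 \;\le\; \sum_{j=0}^{\infty}\frac{1}{j!}\bigg(\frac{k^2\, 2^{n/k}}{p-k+1}\bigg)^{j} = \exp\bigg[\frac{k^2\, 2^{n/k}}{p-k+1}\bigg].
\]

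Finally I would insert the hypothesis $n \le (1-\eta)\log_2\binom{p}{k}$. Since $j\ge 0$, each summand is increasing in $n$, so it is legitimate to use $2^{n/k}\le \binom{p}{k}^{(1-\eta)/k}$; combining this with the standard bound $\binom{p}{k}\le (ep/k)^{k}$ gives $2^{n/k}\le e^{1-\eta}(p/k)^{1-\eta}$. Substituting and simplifying via $k^2 (p/k)^{1-\eta} = k^{1+\eta}p^{1-\eta} = k\big(\frac{k}{p}\big)^{\eta}\,p$ turns the exponent into $e^{1-\eta}k\big(\frac{k}{p}\big)^{\eta}\frac{p}{p-k+1}$, which is precisely \eqref{fact1to}. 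The only place requiring genuine care — and the main obstacle — is the combinatorial estimate of $h_j$: one must extract exactly one factor $1/j!$ (so that the series closes up to $e^{(\cdot)}$ rather than a Bessel-type sum $\sum_j x^j/(j!)^2$, which would give a far weaker square-root exponent) while simultaneously producing the geometric factor $\big(\frac{k}{p-k+1}\big)^{j}$ with the sharp denominator $p-k+1$. Bounding $\binom{p-k}{k-j}\le\binom{p}{k-j}$ and then applying the ratio identity for $\binom{p}{k-j}/\binom{p}{k}$ is what achieves both of these simultaneously.
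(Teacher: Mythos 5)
Your proof is correct and follows essentially the same route as the paper: both start from the exact expression \eqref{eq:chi_sq_2}, both rest on the same key combinatorial estimate $\binom{p-k}{\ell}/\binom{p}{k} \le \binom{p}{\ell}/\binom{p}{k} \le \big(\tfrac{k}{p-k+1}\big)^{k-\ell}$, and both finish with $\binom{p}{k}\le(ep/k)^k$. The only (cosmetic) difference is that you close the sum via $\binom{k}{j}\le k^j/j!$ and the exponential series, whereas the paper keeps $\binom{k}{\ell}$ exact, applies the binomial theorem to get $\big[1+\binom{p}{k}^{(1-\eta)/k}\tfrac{k}{p-k+1}\big]^k-1$, and then uses $1+x\le e^x$ --- the two are interchangeable and yield the identical final bound.
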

\begin{proof}
    Using the assumption $n\leq (1-\eta) \log_2 {p \choose k}$, we bound \eqref{eq:chi_sq_2} as follows:
    \begin{align}
        &\chi^2(P\|Q) \nonumber \\
        &~~=\frac{1}{{p\choose k}}\sum_{l=0}^{k}{k \choose l}{p-k \choose l} 2^{n(1-\frac{l}{k})}-1 \label{follow1} \\
        &~~\leq \frac{1}{{p\choose k}}\sum_{l=0}^{k}{k \choose l}{p-k \choose l} 2^{(1-\frac{l}{k}) (1-\eta) \log_2{p \choose k} }-1 \label{follow2}  \\
        &~~= \frac{1}{{p\choose k}}\sum_{l=0}^{k}{k \choose l}{p-k \choose l} {p \choose k}^{(1-\eta)(1-\frac{l}{k})}-1\\
        \label{google1}
        &~~={p \choose k}^{-\eta}\sum_{l=0}^{k}{k \choose l}{p-k \choose l} {p \choose k}^{-(1-\eta)\frac{l}{k}}-1. 
    \end{align}
    In addition, for all $l<k$, we have
    \begin{align}
        \frac{{p-k \choose l}}{{p \choose k}}&\leq \frac{{p \choose l}}{{p \choose k}}\\
        &=\frac{k!(p-k)!}{l!(p-l)!}\\
        &=\frac{k(k-1)\cdots (l+1)}{(p-l)(p-l-1)\cdots (p-k+1)}\\
        \label{google2}
        &\leq \bigg(\frac{k}{p-k+1}\bigg)^{k-l}.
    \end{align}
    Since~\eqref{google2} also holds for $l=k$, it follows that
    \begin{align}
        \label{google3}
        \frac{{p-k \choose l}}{{p \choose k}} \leq \bigg(\frac{k}{p-k+1}\bigg)^{k-l}
    \end{align} for all $0\leq l \leq k$.
    
    From~\eqref{google1} and~\eqref{google3}, we obtain 
    \begin{align}
        &\chi^2(P\|Q) \nonumber  \\ 
        &\leq {p \choose k}^{1-\eta}\sum_{l=0}^{k}{k \choose l}\bigg(\frac{k}{p-k+1}\bigg)^{k-l} {p \choose k}^{-(1-\eta)\frac{l}{k}}-1\\
        &={p \choose k}^{1-\eta}\bigg(\frac{k}{p-k+1}\bigg)^k \nonumber \\
            &\qquad\qquad \times \sum_{l=0}^{k}{k \choose l}\bigg(\frac{k}{p-k+1}\bigg)^{-l} {p \choose k}^{-(1-\eta)\frac{l}{k}}-1\\
        \label{note1}
				&={p \choose k}^{1-\eta}\bigg(\frac{k}{p-k+1}\bigg)^k\bigg[1+\frac{p-k+1}{k}{p \choose k}^{-\frac{1-\eta}{k}}\bigg]^k-1\\
        &=\bigg[{p \choose k}^{\frac{1-\eta}{k}}\bigg(\frac{k}{p-k+1}\bigg)+1\bigg]^k-1\\
        \label{qote}
        &\leq \bigg[\bigg(\frac{pe}{k}\bigg)^{1-\eta} \bigg(\frac{k}{p-k+1}\bigg)+1\bigg]^k-1\\
        &\leq \exp\bigg[k \bigg(\frac{pe}{k}\bigg)^{1-\eta} \bigg(\frac{k}{p-k+1}\bigg)\bigg]-1 \label{interact0}\\
        \label{interfact}
        &=\exp\bigg[e^{1-\eta}k \bigg(\frac{k}{p}\bigg)^{\eta} \frac{p}{p-k+1}\bigg]-1
    \end{align} 
    where \eqref{note1} follows from the fact that $\sum_{j=0}^k {k \choose j} x^j = (1+x)^k$, \eqref{qote} follows from~${p\choose k} \leq \big(\frac{p e}{k}\big)^k$, and \eqref{interact0} uses $1+x \le e^{x}$.  This proves \eqref{fact1to}.
\end{proof}

To prove Theorem \ref{verymain1}, it suffices to show that the right-hand side of \eqref{fact1to} tends to zero as $p \to \infty$.  To see this, observe that the condition $k=o(p^{\frac{\eta}{1+\eta}})$ can equivalently be written as $k \big(\frac{k}{p}\big)^{\eta}=o(1)$, and this condition implies that
\begin{align}
    0\leq \chi^2(P\|Q) \leq \exp\bigg[e^{1-\eta}k \bigg(\frac{k}{p}\bigg)^{\eta} \frac{p}{p-k+1}\bigg]-1 \to 0
\end{align} 
as $p \to \infty$. This means that $\chi^2(P\|Q)  \to 0$ when $k=o(p^{\frac{\eta}{1+\eta}})$, which proves Theorem \ref{verymain1}.

\subsection{Proof of Theorem \ref{verymain2} (A Condition for Non-Vanishing $\chi^2$-Divergence)}\label{prflem1add}

Here we prove that $\liminf_{p \to \infty} \chi^2(P\|Q) \geq c$ for some $c>0$ when $n\geq (1-\eta) \log_2{p \choose k}$ under the assumptions $k = o(p)$ and $k=\Omega(p^{\frac{\eta}{1+\eta}})$.\footnote{We are grateful to an anonymous reviewer for helping us to significantly simplify this proof.}

As mentioned in Section \ref{sec:distinguish}, $\chi^2(P\|Q) \to 0$ implies that $d_{\rm{TV}}(P,Q)\to 0$, which in turn implies that weak detection is impossible for any algorithm. However, Theorem \ref{verymain3} shows that weak detection is possible when $k=\Omega(p^{\frac{1+\eta}{3+\eta}+\delta})$ for arbitrarily small $\delta \in (0,1)$. Note that that $\frac{1+\eta}{3+\eta} < \frac{1}{2}$ for any $\eta \in [0,1)$. Thus, it must be the case that $\liminf_{p \to \infty} \chi^2(P\|Q) \ge c$ whenever $k=\Omega(\sqrt{p})$, since otherwise $\chi^2(P\|Q)=o(1)$ would be a contradiction.  In the following, we therefore only consider $k = o(\sqrt p)$.

We need to prove that $\liminf_{p \to \infty} \chi^2(P\|Q) \geq c$ for some $c>0$ when $n\geq (1-\eta) \log_2{p \choose k}$ under the assumptions $k = o(\sqrt{p})$ and $k=\Omega(p^{\frac{\eta}{1+\eta}})$.
Following \eqref{follow1}--\eqref{google1} with the     	inequality in \eqref{follow2} reversed, we have
	\begin{align}
	\chi^2(P\|Q)\geq {p \choose k}^{-\eta}\sum_{l=0}^{k}{k \choose l}{p-k \choose l} {p \choose k}^{-(1-\eta)\frac{l}{k}}-1. \label{chi_initadd}
	\end{align}
Further lower bounding the right-hand side by taking only the terms $l \in \{k-1,k\}$, we obtain
\begin{align}
\chi^2(P\|Q) &\geq -1+ \frac{{p-k\choose k}}{{p\choose k}}+k \frac{{p-k \choose k-1}}{{p\choose k}}{p\choose k}^{\frac{1-\eta}{k}}\\
&=-1+ \frac{{p-k\choose k}}{{p\choose k}}+ \frac{k^2}{p-2k+1}\frac{{p-k\choose k}}{{p\choose k}}{p\choose k}^{\frac{1-\eta}{k}} \label{cu0}\\
&=-1+\frac{{p-k\choose k}}{{p\choose k}}\bigg[1+\frac{k^2}{p-2k+1}{p\choose k}^{\frac{1-\eta}{k}}\bigg]\\
&\geq -1+\frac{{p-k\choose k}}{{p\choose k}}\bigg[1+\frac{k^2}{p-2k+1}\bigg(\frac{p}{k}\bigg)^{1-\eta}\bigg] \label{cu1}\\
&=-1+\frac{{p-k\choose k}}{{p\choose k}}\bigg[1+\frac{k^{1+\eta}}{p^{\eta}}\bigg(\frac{p}{p-2k+1}\bigg)\bigg] \label{cu2},
\end{align} 
where \eqref{cu0} uses ${ {p-k} \choose {k-1} } = { {p-k} \choose {k} }\frac{k}{p-2k+1}$, and \eqref{cu1} follows since ${p\choose k}\geq (\frac{p}{k})^k$.  To bound the ratio of binomial coefficients in \eqref{cu2}, observe that
\begin{align}
1&\geq\frac{{p-k\choose k}}{{p\choose k}}\\
&=\frac{(p-k)(p-k-1)\cdots (p-2k+1)}{p(p-1)\cdots(p-k+1)}\\
&> \bigg(1-\frac{2k}{p}\bigg)^k\\
&\geq 1-\frac{2k^2}{p} \label{eq:last}
\end{align} 
where \eqref{eq:last} follows since $(1+x)^k \geq 1+kx$ for all $x>-1$. Hence, since we are considering  $k=o(\sqrt{p})$, it follows that $\frac{{p-k\choose k}}{{p\choose k}}=1+o(1)$.  Combining this with \eqref{cu2} and the assumption $k=\Omega(p^{\frac{\eta}{1+\eta}})$, we obtain
\begin{align}
\liminf_{p \to \infty} \chi^2(P\|Q) \ge c
\end{align} 
for some constant $c>0$, completing the proof of Theorem \ref{verymain2}.

\subsection{Proof of Theorem \ref{thm:weak} (Negative Result for Weak Recovery)} \label{app:weak}

As mentioned in Section \ref{sec:distinguish}, $\chi^2(P\|Q) \to 0$ implies that $D(P\|Q) \to 0$.  Consider $(\bX,\bY) \sim P$, along with an additional pair $(X',Y') \in \{0,1\}^p \times \{0,1\}$ drawn from the same joint distribution as a single test in $(\bX,\bY)$, independently from $(\bX,\bY)$.  Following the steps of \cite{Ree19} for sparse linear regression, we consider the following conditional mutual information term:
\begin{align}
    &I(S; \bY,Y' | \bX,X') \nonumber \\
        &\quad = \bE_P\bigg[ \log\frac{P(\bY,Y'|\bX,X',S)}{P(\bY,Y'|\bX,X')} \bigg] \\
        &\quad = \bE_P\bigg[ \log\frac{P(\bY,Y'|\bX,X',S)}{Q(\bY,Y')} \bigg]  \nonumber \\
            & \qquad\qquad - \bE_P\bigg[ \log\frac{P(\bY,Y'|\bX,X')}{Q(\bY,Y')} \bigg] \\
        &\quad = \bE_P\bigg[ \log\frac{P(\bY,Y'|\bX,X',S)}{Q(\bY,Y')} \bigg] - D(P\|Q),
\end{align}
where $D(P\|Q)$ is now defined according to $P$ and $Q$ containing $n+1$ tests instead of $n$ (one extra for $X'$, $Y'$).  Under $P$, we have $P(\bY,Y'|\bX,X',S) = 1$ almost surely, and combining this with $Q(\bY,Y') = 2^{-(n+1)}$, it follows that
\begin{equation}
    I(S; \bY,Y' | \bX,X') = (n+1)\log 2 - D(P\|Q). \label{eq:mi1}
\end{equation}
Moreover, by the chain rule for mutual information, we have
\begin{align}
    I(S; \bY,Y' | \bX,X') 
        &= I(S; \bY | \bX, X') + I(S; Y' | \bX, X', \bY) \\
        &\le n \log 2 + H(Y' | \bX, X', \bY), \label{eq:mi2}
\end{align}
where the two terms are attained as follows by expanding the conditional mutual information as as a difference of conditional entropies:
\begin{itemize}
    \item For the first term, write $I(S; \bY | \bX, X') = H(\bY | \bX, X') - H(\bY | \bX, X', S) \le H(\bY | \bX, X')$, and note that $H(\bY | \bX, X') \le n \log 2$ since $\bY \in \{0,1\}^n$ and entropy is upper bounded by the logarithm of the number of outcomes;
    \item For the second term, write $I(S; Y' | \bX, X', \bY) = H(Y' | \bX, X', \bY) - H(Y' | \bX, X', \bY, S)$, and note that we have $H(Y' | \bX, X', \bY, S) = 0$ since $Y'$ is deterministic given $(X',S)$. 
\end{itemize}
Combining \eqref{eq:mi1} and \eqref{eq:mi2} gives
\begin{equation}
    H(Y' | \bX, X', \bY) \ge \log 2 - D(P\|Q) = \log 2 - o(1), \label{eq:H_bound}
\end{equation}
since $D(P\|Q) \to 0$ for $n \le (1-\eta) \log_2{p \choose k}$ by Theorem \ref{verymain1} (the replacement of of $n$ by $n+1$ only amounts to a negligible multiplicative $1+o(1)$ change in $\eta$).  
Since the entropy functional is continuous, and the entropy of a binary random variable is at most $\log 2$ with equality if and only if the random variable is equiprobable on its two values, we deduce from \eqref{eq:H_bound} that the following holds: With probability $1-o(1)$ with respect to $(\bX,X',\bY)$, the conditional distribution of $Y'$ places probability $\frac{1}{2} + o(1)$ on each of $Y'=0$ and $Y'=1$.

To complete the proof of Theorem \ref{thm:weak}, we show that the preceding claim precludes the possibility of weak recovery, i.e., \eqref{eq:no_weak} holds.  Suppose by contradiction to \eqref{eq:no_weak} that it were possible to use $(\bX,\bY)$ to attain $|S \cap \hatS| \ge \delta k$ with probability at least $\delta$, for some $\delta > 0$.  In the following, we assume the extreme case $|S \cap \hatS| = \delta k$; the case of strict inequality follows similarly.  Consider a procedure that uses this $\hatS$ to construct an estimator that takes the test vector $X'$ as input and returns an estimate $\hat{Y}'$ of $Y'$ as follows: Set $\hat{Y}' = 1$ if the test includes any item from $\hatS$, and $\hat{Y}' = 0$ otherwise.  There are two scenarios in which the estimate is incorrect:
\begin{itemize}
    \item The test may include no items from $S$ (and hence $Y' = 0$), but an item from $S' \setminus S$ (and hence $\hat{Y}' = 1$).  By the choice of $\nu$ in \eqref{eq:choice_nu}, the probability (with respect to $P$) of this occurring is $\frac{1}{2}\big(1 - \big(1-\frac{\nu}{k}\big)^{(1-\delta)k}\big) = \frac{1}{2}\big(1-\big(\frac{1}{2}\big)^{1-\delta}\big)$.
    \item The test may include no items from $S'$ (and hence $\hatY' = 0$), but an item from $S \setminus S'$ (and hence $Y' = 1$).  By the same argument as above, the probability of this occurring is $\frac{1}{2}\big(1-\big(\frac{1}{2}\big)^{1-\delta}\big)$.
\end{itemize} 
Hence, when $|S \cap \hatS| = \delta k$, the estimator produces $\hatY' = Y'$ with probability $\big(\frac{1}{2}\big)^{1-\delta}$.  As a result, for any fixed $\delta > 0$, the success probability behaves as $\frac{1}{2} + \Omega(1)$.  This is in contradiction with the conditional distribution of $Y'$ stated following \eqref{eq:H_bound} (which only permits a $\frac{1}{2} + o(1)$ probability of correctness), and this completes the proof by contradiction establishing \eqref{eq:no_weak}.

\subsection{Proof of Theorem \ref{thm:identify_dd} (Negative Result for Identifying a Definite Defective)} \label{app:identify_dd}

Consider any algorithm that, with probability one, only outputs $\calI \ne 0$ when $\calI$ is the index of a defective item.
    If $S$ is the true defective set, then it is easy to see that an error occurs (i.e., $\calI = 0$) if some $S'$ disjoint from $S$ is still consistent with $(\bX,\bY)$.  Denoting this event by $\calE_{S'}$, it follows that
    \begin{align}
        \bbP[\calI = 0 \,|\, S] 
            &\ge \bbP\bigg[ \bigcup_{S' \,:\, S \cap S' = \emptyset} \{ \calE_{S'} \} \bigg] \\
            &\ge \sum_{S' \,:\, S \cap S' = \emptyset} \frac{ \bbP[ \calE_{S'} ]^2 }{ \sum_{S^{\natural} \,:\, S \cap S^{\natural} = \emptyset} \bbP[ \calE_{S'} \cap \calE_{S^{\natural}} ] }, \label{eq:deCaen}
    \end{align}
    where \eqref{eq:deCaen} follows from de Caen's lower bound on the probability of a union \cite{Dec97}.  However, for $S'$ and $S^{\natural}$ both disjoint from $S$ (but possibly overlapping with each other), $\bbP[ \calE_{S'} \cap \calE_{S^{\natural}} ]$ is exactly the same quantity as $\bbE_Q\big[ \calI_{S'}(\bX,\bY) \calI_{S^{\natural}}(\bX,\bY) \big]$ appearing in \eqref{keykey}.  In particular, $\bbP[ \calE_{S'} ]$ corresponds to the case that $S' = S^{\natural}$.  Substituting the expression in \eqref{keykey} gives $\bbP[ \calE_{S'} \cap \calE_{S^{\natural}} ] = 2^{-n(1+\frac{\ell}{k})}$ when $|S' \cap S^{\natural}| = k-\ell$, and substitution into \eqref{eq:deCaen} gives
    \begin{align}
        \bbP[\calI = 0 \,|\, S] 
            &\ge {p \choose k} \frac{ 4^{-n} }{ \sum_{\ell=0}^k {k \choose \ell}{p-k \choose \ell} 2^{-n(1+\frac{\ell}{k})} } \\
            &= \frac{1}{1 + \chi^2(P\|Q)}, \label{equating}
    \end{align}
    where \eqref{equating} follows by equating with \eqref{eq:chi_sq_2}.  From Theorem \ref{verymain1}, we know that $\chi^2(P\|Q) \to 0$ under the conditions of Theorem \ref{thm:identify_dd}, and we conclude that $\bbP[\calI = 0 \,|\,S] \to 1$.  Since this holds regardless of which $S$ is conditioned on, we obtain $\bbP[\calI = 0] \to 1$ as desired.

\section{Berry-Esseen Theorem} \label{sec:berry_esseen}

Our analysis makes use of the following Berry-Esseen theorem, a non-asymptotic form of the central limit theorem.

\begin{theorem} \label{feller}
    {\em (Berry-Esseen Theorem \cite[Theorem 2]{feller})}
    For $j=1,\ldots,p$, let $X_j$ be independent random variables with 
    \begin{align*}
        \mu_j=\bbE[X_j],\quad \sigma_j^2=\var[X_j],~~ \text{\rm and}~~ \rho_j=\bbE[|X_j-\mu_j|^3].
    \end{align*}
    Denote $V=\sum_{j=1}^p \sigma_j^2$ and $T=\sum_{j=1}^p \rho_j$. Then, for any $\lambda\in \bbR$, we have
    \begin{align}
        \bigg|\bbP\bigg[\frac{\sum_{j=1}^p (X_j-\mu_j)}{\sqrt{V}}\geq \lambda\bigg]-\calQ(\lambda)\bigg| \leq  \frac{6T}{V^{3/2}},
    \end{align}
    where $\calQ(t) =\int_t^{\infty} \frac{1}{\sqrt{2\pi}}e^{-\frac{u^2}{2}}\,{\rm d}u$.
\end{theorem}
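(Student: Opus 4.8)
The plan is to establish this non-identically-distributed Berry--Esseen bound by the classical Fourier-analytic route: combine Esseen's smoothing inequality with a third-moment Taylor estimate of the characteristic function of the normalized sum. The conceptual content is standard; the only genuine work is tracking constants accurately enough to land on the factor $6$ rather than merely some absolute constant.

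First I would normalize. Since $\bbP[\,W \geq \lambda\,] = 1 - F_W(\lambda^-)$ and $\calQ(\lambda) = 1 - \Phi(\lambda)$ for the standard normal CDF $\Phi$, the claimed tail bound is equivalent to $\sup_\lambda |F_W(\lambda) - \Phi(\lambda)| \leq 6\beta$, where $W := \big(\sum_j (X_j-\mu_j)\big)/\sqrt{V}$ and $\beta := T/V^{3/2}$. Centering (taking $\mu_j = 0$ without loss of generality) and setting $Y_j := (X_j-\mu_j)/\sqrt{V}$, we have $\bbE[Y_j]=0$, $v_j := \var[Y_j] = \sigma_j^2/V$ with $\sum_j v_j = 1$, and $\beta_j := \bbE[|Y_j|^3] = \rho_j/V^{3/2}$ with $\sum_j \beta_j = \beta$. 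If $\beta \geq \tfrac16$ the bound is trivial, since the left-hand side never exceeds $1$; hence I would henceforth assume $\beta$ is small.

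Next I would invoke Esseen's smoothing inequality. Writing $\widehat{F}_W$ for the characteristic function of $W$, and using that the normal limit has characteristic function $e^{-t^2/2}$ and density bounded by $m := (2\pi)^{-1/2}$, it gives for every $T_0 > 0$
\[
\sup_\lambda |F_W(\lambda) - \Phi(\lambda)| \;\leq\; \frac{1}{\pi}\int_{-T_0}^{T_0}\bigg|\frac{\widehat{F}_W(t) - e^{-t^2/2}}{t}\bigg|\,{\rm d}t \;+\; \frac{24\,m}{\pi\,T_0}.
\]
This reduces everything to estimating the difference between $\widehat{F}_W(t) = \prod_j f_j(t)$, where $f_j$ denotes the characteristic function of $Y_j$, and $e^{-t^2/2} = \prod_j e^{-v_j t^2/2}$.

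The core estimate is the Taylor bound with third-moment remainder, $\big|f_j(t) - (1 - \tfrac12 v_j t^2)\big| \leq \tfrac16 \beta_j |t|^3$, combined with the elementary telescoping inequality $\big|\prod_j a_j - \prod_j b_j\big| \leq \sum_j |a_j - b_j|$ (valid whenever all $|a_j|,|b_j| \leq 1$) and the comparison $|e^z - 1 - z| \leq \tfrac12 |z|^2$. Ensuring along the way that each $|f_j(t)|$ is dominated by its Gaussian factor $e^{-v_j t^2/2}$ up to a small cubic correction, so that the product does not blow up, these ingredients yield on the range $|t| \leq T_0$ with $T_0$ of order $1/\beta$ a bound of the form
\[
\big|\widehat{F}_W(t) - e^{-t^2/2}\big| \;\leq\; C\,\beta\,|t|^3\,e^{-t^2/3}
\]
for an absolute constant $C$. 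Substituting this into the smoothing inequality, dividing by $|t|$, and using $\int_{-\infty}^{\infty} |t|^2 e^{-t^2/3}\,{\rm d}t < \infty$ to control the integral term, I would finally choose $T_0 = c/\beta$ so as to balance it against $24m/(\pi T_0)$, and collect constants to reach $6\beta = 6T/V^{3/2}$.

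The main obstacle is entirely quantitative: obtaining the qualitative rate $O(\beta)$ is routine, but recovering the explicit constant $6$ requires the sharp form of the smoothing lemma, a uniform (in $|t| \leq T_0$) control of the Taylor remainder that keeps the product of characteristic functions close to $e^{-t^2/2}$, and a near-optimal choice of $T_0$. I expect the bookkeeping in the product estimate and the balancing of the two terms in the smoothing inequality to be where essentially all of the effort lies.
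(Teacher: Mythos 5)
The paper does not prove this statement at all: it is imported verbatim as \cite[Theorem 2]{feller}, and the appendix uses it only as a black box (via Corollary~\ref{momlem}). So there is no ``paper's own proof'' to compare against; the relevant comparison is with the proof in the cited source, and your Fourier-analytic route --- Esseen's smoothing inequality plus a third-moment Taylor estimate of the characteristic function of the normalized sum --- is exactly the argument Feller gives there. In that sense you have chosen the right approach, not merely a workable one.

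That said, as a standalone proof your write-up has a genuine gap precisely where the theorem has content: the constant $6$. Everything you outline up to ``collect constants'' delivers only $\sup_\lambda|F_W(\lambda)-\Phi(\lambda)|\le C\beta$ for some unspecified absolute $C$, and a bound of that form would not justify the displayed inequality with the factor $6$. Two specific points need repair. First, the telescoping inequality $|\prod_j a_j-\prod_j b_j|\le\sum_j|a_j-b_j|$ applied naively to $f_j(t)$ versus $e^{-v_j t^2/2}$ gives a bound with no decay in $t$, which is not integrable against $1/|t|$ out to $T_0\sim 1/\beta$; you must first show (for $|t|$ up to order $1/\beta$, after disposing of the case $\beta\ge 1/6$ as you do) that each $|f_j(t)|\le e^{-v_j t^2/3}$, so that in each telescoping term the remaining product contributes $e^{-(1-v_j)t^2/3}$, and only then does the claimed intermediate bound $C\beta|t|^3e^{-t^2/3}$ follow. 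You gesture at this but do not carry it out, and it also silently uses $\sigma_j^3\le\rho_j$ (Lyapunov) to absorb the $\sum_j v_j^2t^4$ error into $\beta$. Second, the balancing of the integral term against $24m/(\pi T_0)$ with $m=(2\pi)^{-1/2}$ must be done with the sharp numerical values throughout to land on $6$ rather than a larger constant; this is the bulk of Feller's actual computation and is entirely absent here. Since the paper only needs \emph{some} explicit absolute constant (the factor $6$ plays no special role in Theorem~\ref{verymain3}), the gap is harmless for the application, but your proposal as written does not establish the theorem as stated.
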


More precisely, we use the following simple corollary.

\begin{corollary} \label{momlem} 
    Let $Z \sim \mbox{Binomial}(p, q_0)$. Then, for any $\lambda \in \bbR$, the following holds:
    \begin{align}
        \label{eq225eq}
        \bigg|\bbP\bigg[\frac{Z-pq_0}{\sigma\sqrt{p}} \geq \lambda\bigg]-\calQ(\lambda)\bigg| \leq  \frac{6\rho}{\sigma^3 \sqrt{p}},
    \end{align}
    where
    \begin{align}
        \rho &= (1-q_0)^3 q_0 + q_0^3 (1-q_0), \label{eq:rho}\\
        \sigma &=\sqrt{(1-q_0)q_0}.  \label{eq:sigma}
    \end{align}
\end{corollary}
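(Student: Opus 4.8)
The plan is to apply the Berry-Esseen theorem (Theorem~\ref{feller}) directly, by writing $Z$ as a sum of independent Bernoulli summands and computing the three moments it requires. Specifically, I would write $Z = \sum_{j=1}^p X_j$, where $X_1,\dotsc,X_p$ are i.i.d.\ with $X_j \sim \mathrm{Bernoulli}(q_0)$, so that each has mean $\mu_j = q_0$ and variance $\sigma_j^2 = q_0(1-q_0) = \sigma^2$ with $\sigma$ as in \eqref{eq:sigma}. Summing over $j$ gives $V = \sum_{j=1}^p \sigma_j^2 = p\sigma^2$, hence $\sqrt{V} = \sigma\sqrt{p}$, and the standardized sum $\frac{\sum_{j=1}^p (X_j - \mu_j)}{\sqrt{V}}$ is exactly $\frac{Z - pq_0}{\sigma\sqrt{p}}$, matching the left-hand side of \eqref{eq225eq}.

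The only computation of any substance is the third absolute central moment $\rho_j = \bbE[|X_j - q_0|^3]$. Since $X_j = 1$ with probability $q_0$ and $X_j = 0$ with probability $1-q_0$, we have $|X_j - q_0| = 1-q_0$ in the former case and $|X_j - q_0| = q_0$ in the latter, so $\rho_j = q_0(1-q_0)^3 + (1-q_0)q_0^3$, which is precisely the quantity $\rho$ in \eqref{eq:rho}. Because the summands are identically distributed, it follows that $T = \sum_{j=1}^p \rho_j = p\rho$.

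It then remains only to substitute these quantities into the Berry-Esseen bound: the right-hand side $\frac{6T}{V^{3/2}}$ becomes $\frac{6p\rho}{(p\sigma^2)^{3/2}} = \frac{6\rho}{\sigma^3\sqrt{p}}$, which is exactly the bound claimed in \eqref{eq225eq}. I expect no genuine obstacle here; the corollary is a routine specialization of Theorem~\ref{feller} to identically distributed Bernoulli terms, and essentially all of the work is the bookkeeping of the first three moments together with the algebraic simplification of $V^{3/2}$. The one point worth noting is that Theorem~\ref{feller} is stated for the one-sided tail $\bbP[\,\cdot\, \ge \lambda]$, which aligns directly with the one-sided form in the corollary, so no additional argument (e.g.\ combining two tail bounds) is needed.
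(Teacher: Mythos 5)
Your proposal is correct and follows essentially the same route as the paper's proof: decompose $Z$ into i.i.d.\ Bernoulli$(q_0)$ summands, compute $\sigma_j^2 = q_0(1-q_0)$ and $\rho_j = q_0(1-q_0)^3 + (1-q_0)q_0^3$, and substitute $V = p\sigma^2$, $T = p\rho$ into Theorem~\ref{feller}. In fact your bookkeeping is slightly cleaner, since the paper's proof contains a harmless typo writing $T = p\rho_1^3 = p\rho^3$ where it should read $T = p\rho_1 = p\rho$, which is what you (and the final bound) correctly use.
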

\begin{proof} 
    Since $Z \sim \mbox{Binomial}(p, q_0)$, we can write $Z=\sum_{j=1}^p Z_j$, where the $Z_j$ are i.i.d.~with distribution $\mbox{Bernoulli}(q_0)$.  We shift to a zero-mean summation by writing $Z-pq_0=\sum_{j=1}^p (Z_j-q_0)$, and observe that
    \begin{align}
        \rho_1:=\bbE[|Z_1-q_0|^3]=(1-q_0)^3 q_0 + |0-q_0|^3 (1-q_0)=\rho,
    \end{align}
    and
    \begin{align}
        \sigma_1^2=\bbE[(Z_1-q_0)^2]=(1-q_0)^2 q_0+ (0-q_0)^2(1-q_0)=\sigma^2
    \end{align}
    for $\rho$ and $\sigma$ defined in \eqref{eq:rho}--\eqref{eq:sigma}.  Hence,~\eqref{eq225eq} follows directly from Theorem~\ref{feller} with $T=p\rho_1^3=p \rho^3$ and $V=p \sigma_1^2=p\sigma^2$.
\end{proof}

\bibliographystyle{IEEEtran}
\bibliography{isitbib,JS_References}

\begin{thebibliography}{10}
\providecommand{\url}[1]{#1}
\csname url@samestyle\endcsname
\providecommand{\newblock}{\relax}
\providecommand{\bibinfo}[2]{#2}
\providecommand{\BIBentrySTDinterwordspacing}{\spaceskip=0pt\relax}
\providecommand{\BIBentryALTinterwordstretchfactor}{4}
\providecommand{\BIBentryALTinterwordspacing}{\spaceskip=\fontdimen2\font plus
\BIBentryALTinterwordstretchfactor\fontdimen3\font minus
  \fontdimen4\font\relax}
\providecommand{\BIBforeignlanguage}[2]{{%
\expandafter\ifx\csname l@#1\endcsname\relax
\typeout{** WARNING: IEEEtran.bst: No hyphenation pattern has been}%
\typeout{** loaded for the language `#1'. Using the pattern for}%
\typeout{** the default language instead.}%
\else
\language=\csname l@#1\endcsname
\fi
#2}}
\providecommand{\BIBdecl}{\relax}
\BIBdecl

\bibitem{Ald19}
M.~Aldridge, O.~Johnson, and J.~Scarlett, ``Group testing: An information
  theory perspective,'' \emph{Found. Trend. Comms. Inf. Theory}, vol.~15, no.
  3--4, pp. 196--392, 2019.

\bibitem{Kau64}
W.~Kautz and R.~Singleton, ``Nonrandom binary superimposed codes,'' \emph{IEEE
  Trans. Inf. Theory}, vol.~10, no.~4, pp. 363--377, 1964.

\bibitem{Du93}
D.~Du and F.~K. Hwang, \emph{Combinatorial group testing and its
  applications}.\hskip 1em plus 0.5em minus 0.4em\relax World Scientific, 2000,
  vol.~12.

\bibitem{Mal78}
M.~Malyutov, ``\BIBforeignlanguage{English}{The separating property of random
  matrices},'' \emph{\BIBforeignlanguage{English}{Math. Notes Acad. Sci.
  {USSR}}}, vol.~23, no.~1, pp. 84--91, 1978.

\bibitem{Deb05}
A.~{de Bonis}, L.~Gasieniec, and U.~Vaccaro, ``Optimal two-stage algorithms for
  group testing problems,'' \emph{SIAM Journal on Computing}, vol.~34, no.~5,
  pp. 1253--1270, 2005.

\bibitem{Dya83}
A.~G. D'yachkov and V.~V. Rykov, ``A survey of superimposed code theory,''
  \emph{Prob. Contr. Inf.}, vol.~12, no.~4, pp. 1--13, 1983.

\bibitem{Ras90}
A.~Rashad, ``Random coding bounds on the rate for list-decoding superimposed
  codes,'' \emph{Prob. Ctrl. Inf. Theory}, vol.~19, no.~2, pp. 141--149, 1990.

\bibitem{Che09}
M.~Cheraghchi, ``Noise-resilient group testing: Limitations and
  constructions,'' in \emph{Int. Symp. Found. Comp. Theory}, 2009, pp. 62--73.

\bibitem{Ind10}
P.~Indyk, H.~Q. Ngo, and A.~Rudra, ``Efficiently decodable non-adaptive group
  testing,'' in \emph{ACM-SIAM Symp. Disc. Alg. (SODA)}, 2010.

\bibitem{Ngo11}
H.~Q. Ngo, E.~Porat, and A.~Rudra, ``Efficiently decodable error-correcting
  list disjunct matrices and applications,'' in \emph{Int. Colloq. Automata,
  Lang., and Prog.}, 2011.

\bibitem{Sca17}
J.~Scarlett and V.~Cevher, ``How little does non-exact recovery help in group
  testing?'' in \emph{IEEE Int. Conf. Acoust. Sp. Sig. Proc. (ICASSP)}, 2017.

\bibitem{Lee15a}
K.~{Lee}, K.~{Chandrasekher}, R.~{Pedarsani}, and K.~{Ramchandran},
  ``{SAFFRON}: A fast, efficient, and robust framework for group testing based
  on sparse-graph codes,'' \emph{IEEE Trans. Sig. Proc.}, vol.~67, no.~17, pp.
  4649--4664, Sept. 2019.

\bibitem{Sca15b}
J.~Scarlett and V.~Cevher, ``Phase transitions in group testing,'' in
  \emph{Proc. ACM-SIAM Symp. Disc. Alg. (SODA)}, 2016.

\bibitem{Cha11}
C.~L. Chan, P.~H. Che, S.~Jaggi, and V.~Saligrama, ``Non-adaptive probabilistic
  group testing with noisy measurements: Near-optimal bounds with efficient
  algorithms,'' in \emph{Allerton Conf. Comm., Ctrl., Comp.}, Sep. 2011, pp.
  1832--1839.

\bibitem{Che11}
M.~{Cheraghchi}, A.~{Hormati}, A.~{Karbasi}, and M.~{Vetterli}, ``Group testing
  with probabilistic tests: Theory, design and application,'' \emph{IEEE Trans.
  Inf. Theory}, vol.~57, no.~10, pp. 7057--7067, Oct 2011.

\bibitem{Mal13}
M.~B. Malyutov, ``Search for sparse active inputs: A review,'' in \emph{Inf.
  Theory, Comb. and Search Theory}, 2013, pp. 609--647.

\bibitem{Ald14a}
M.~Aldridge, L.~Baldassini, and O.~Johnson, ``Group testing algorithms: Bounds
  and simulations,'' \emph{IEEE Trans. Inf. Theory}, vol.~60, no.~6, pp.
  3671--3687, June 2014.

\bibitem{Sca17b}
J.~Scarlett and V.~Cevher, ``Near-optimal noisy group testing via separate
  decoding of items,'' \emph{IEEE Trans. Sel. Topics Sig. Proc.}, vol.~2,
  no.~4, pp. 625--638, 2018.

\bibitem{Coj19}
A.~Coja-Oghlan, O.~Gebhard, M.~Hahn-Klimroth, and P.~Loick,
  ``Information-theoretic and algorithmic thresholds for group testing,'' in
  \emph{Int. Colloq. Aut., Lang. and Prog. (ICALP)}, 2019.

\bibitem{Coj19a}
------, ``Optimal group testing,'' in \emph{Conf. Learn. Theory (COLT)}, 2020.

\bibitem{Cha14}
C.~L. Chan, S.~Jaggi, V.~Saligrama, and S.~Agnihotri, ``Non-adaptive group
  testing: Explicit bounds and novel algorithms,'' \emph{IEEE Trans. Inf.
  Theory}, vol.~60, no.~5, pp. 3019--3035, May 2014.

\bibitem{Bal13}
L.~Baldassini, O.~Johnson, and M.~Aldridge, ``The capacity of adaptive group
  testing,'' in \emph{IEEE Int. Symp. Inf. Theory}, July 2013, pp. 2676--2680.

\bibitem{Joh16}
O.~Johnson, M.~Aldridge, and J.~Scarlett, ``Performance of group testing
  algorithms with near-constant tests-per-item,'' \emph{IEEE Trans. on Inform.
  Th.}, vol.~65, no.~2, pp. 707--723, 2019.

\bibitem{Hah19}
M.~Hahn-Klimroth and P.~Loick, ``Optimal adaptive group testing,'' 2019,
  https://arxiv.org/abs/1911.06647.

\bibitem{Gam17}
D.~Gamarnik and I.~Zadik, ``High dimensional regression with binary
  coefficients. estimating squared error and a phase transtition,'' in
  \emph{Conf. Learn. Theory (COLT)}, 2017, pp. 948--953.

\bibitem{Ree19}
G.~Reeves, J.~Xu, and I.~Zadik, ``The all-or-nothing phenomenon in sparse
  linear regression,'' in \emph{Conf. Learn. Theory (COLT)}, 2019.

\bibitem{Ree19a}
------, ``All-or-nothing phenomena: From single-letter to high dimensions,''
  2019, https://arxiv.org/abs/1912.13027.

\bibitem{Gil08}
A.~Gilbert, M.~Iwen, and M.~Strauss, ``Group testing and sparse signal
  recovery,'' in \emph{Asilomar Conf. Sig., Sys. and Comp.}, Oct. 2008, pp.
  1059--1063.

\bibitem{Ati12}
G.~Atia and V.~Saligrama, ``Boolean compressed sensing and noisy group
  testing,'' \emph{IEEE Trans. Inf. Theory}, vol.~58, no.~3, pp. 1880--1901,
  March 2012.

\bibitem{Sca18}
J.~{Scarlett}, ``Noisy adaptive group testing: Bounds and algorithms,''
  \emph{IEEE Trans. Inf. Theory}, vol.~65, no.~6, pp. 3646--3661, June 2019.

\bibitem{Hoe63}
W.~Hoeffding, ``Probability inequalities for sums of bounded random
  variables,'' \emph{J. Amer. Stat. Assoc.}, vol.~58, no. 301, pp. 13--30,
  1963.

\bibitem{Hwa72}
F.~Hwang, ``A method for detecting all defective members in a population by
  group testing,'' \emph{J. Amer. Stats. Assoc.}, vol.~67, no. 339, pp.
  605--608, 1972.

\bibitem{For68}
G.~Forney, ``Exponential error bounds for erasure, list, and decision feedback
  schemes,'' \emph{IEEE Trans. Inf. Theory}, vol.~14, no.~2, pp. 206--220,
  1968.

\bibitem{Soo96}
S.~Y.~T. Soon, ``Binomial approximation for dependent indicators,''
  \emph{Statistica Sinica}, vol.~6, no.~3, pp. 703--714, 1996.

\bibitem{DuchiNotes}
J.~Duchi, ``Lecture notes for statistics 311/electrical engineering 377,''
  \url{http://stanford.edu/class/stats311/}.

\bibitem{Sas16}
I.~Sason and S.~Verd\'u, ``$f$-divergence inequalities,'' \emph{IEEE Trans.
  Inf. Theory}, vol.~62, no.~11, pp. 5973--6006, Nov. 2016.

\bibitem{Dec97}
D.~{de Caen}, ``A lower bound on the probability of a union,'' \emph{Discrete
  mathematics}, vol. 169, no.~1, pp. 217--220, 1997.

\bibitem{Mal80}
M.~B. Malyutov and P.~S. Mateev, ``Screening designs for non-symmetric response
  function,'' \emph{Mat. Zametki}, vol.~29, pp. 109--127, 1980.

\bibitem{feller}
W.~Feller, \emph{An Introduction to Probability Theory and Its Applications},
  2nd~ed.\hskip 1em plus 0.5em minus 0.4em\relax John Wiley and Sons, 1971.

\end{thebibliography}

\end{document}